\newtheorem{corollary}{Corollary}
\begin{document}

\title{Integral Fluctuation Relations  for Entropy Production 
 at Stopping Times}  

\author{Izaak Neri$^{1}$, Edgar Rold\'{a}n$^{2}$, Simone Pigolotti$^{3}$,\\ and Frank J\"{u}licher$^{4}$}

\address{$^1$ Department of Mathematics, King's College London, Strand, London, WC2R~2LS, UK}
\address{$^2$ ICTP - The Abdus Salam International Centre for Theoretical Physics, Strada Costiera 11, 34151 Trieste, Italy}
\address{$^3$ Biological Complexity Unit, Okinawa Institute for Science and Technology and Graduate University, Onna, Okinawa 904-0495, Japan}
\address{$^4$ Max Planck Institute for the Physics of Complex Systems, N\"{o}thnitzerstrasse 38, 01187 Dresden, Germany}

\begin{abstract}
A stopping time $T$ is the first time when a trajectory of a stochastic process satisfies a specific criterion. In this paper, we use martingale theory to derive the integral fluctuation relation $\langle e^{-S_{\rm tot}(T)}\rangle=1$ for the stochastic entropy production $S_{\rm tot}$ in a  stationary physical system  at stochastic stopping times $T$. This fluctuation relation implies the law $\langle S_{\rm tot}(T)\rangle\geq 0$, which states that it is not possible to reduce entropy on average, even by stopping a stochastic process at a   stopping time, and which we call the second law of thermodynamics at stopping times.    This  law  implies bounds on the average amount of heat  and work a system can extract from its environment when stopped at a random time.      Furthermore,  the integral fluctuation relation implies that certain fluctuations of entropy production are universal or are bounded by universal functions.     These universal properties descend from the integral fluctuation relation by selecting appropriate stopping times: for example,  when  $T$  is a first-passage time for entropy production, then we obtain a bound on the  statistics of negative records of entropy production.      We
illustrate these results on simple models of nonequilibrium systems
described by Langevin equations and  reveal two interesting phenomena.   First, 
we demonstrate that isothermal mesoscopic systems can extract
on average heat from their  environment  when  stopped  at  a  cleverly  chosen  moment and  the second law at stopping times provides a bound on the average extracted heat.  Second, we demonstrate that  the average efficiency at stopping times of an autonomous stochastic heat engines, such as  Feymann's ratchet,  can  be larger than the Carnot efficiency and the  second  law  of thermodynamics at stopping times provides a bound on the average efficiency at stopping times.  
  \end{abstract}

\maketitle 

\section{Introduction and statement of the main results}

  Stochastic thermodynamics  is a thermodynamics theory for   the slow degrees of freedom $\vec{X}(t)$ of  a mesoscopic system that is weakly coupled to an environment in equilibrium~\cite{maes2003origin, sekimoto2010stochastic, jarzynski2011equalities, seifert2012stochastic,van2015ensemble}.       Examples of systems to which stochastic thermodynamics applies are  molecular motors \cite{julicher1997modeling, reimann2002brownian}, biopolymers \cite{bustamante2005nonequilibrium}, self-propelled Brownian particles  \cite{bechinger2016active},   micro-manipulation experiments on colloidal particles  \cite{gavrilov2014real,martinez2017colloidal,martinez2016brownian}, and electronic circuits~\cite{ciliberto2017experiments,pekola2015towards, pekola2018thermodynamics}.  
  
     The  stochastic entropy production $S_{\rm tot}(t)$ is a key variable in stochastic thermodynamics.  It is defined as the sum of the entropy change of the environment $S_{\rm env}(t)$ and a    system entropy change $\Delta S_{\rm sys}(t)$~\cite{seifert2005entropy}. In stochastic thermodynamics   entropy production is a functional of the trajectories  of the  slow degrees of freedom in the system.   
 If time is discrete  and  $\vec{X}(t)$ is a variable of even parity with respect to time reversal, then the entropy production $S_{\rm tot}(t)$ associated with  a trajectory of a nonequilibrium stationary process $\vec{X}(t)$ is  the logarithm of the ratio between the stationary probability density  of that trajectory $p(\vec{X}(1), \vec{X}(2), \ldots, \vec{X}(t))$  and the probability density  of the same trajectory but in time-reversed order~\cite{seifert2012stochastic, lebowitz1999gallavotti, crooks1998nonequilibrium, gaspard2004time},
\begin{eqnarray}
S_{\rm tot}(t) =\log  \frac{p(\vec{X}(1), \vec{X}(2), \ldots, \vec{X}(t))}{p(\vec{X}(t), \vec{X}(t-1), \ldots, \vec{X}(1))},  \label{eq:Markov0}
\end{eqnarray} 
where $\log$ denotes natural logarithm. Here and throughout the paper we use dimensionless units for which Boltzmann's constant $k_{\rm B}=1$. Equation~(\ref{eq:Markov0}) is a particular case of the general expression of stochastic entropy production in terms of probability measures that we will discuss below in Eq.~(\ref{eq:SGeneral}).  The functional $S_{\rm tot}(t)$ is exactly equal to zero at all times for systems in equilibrium. For nonequilibrium systems, entropy production fluctuates with expected value larger than zero, $\langle S_{\rm tot}(t)\rangle\geq 0$. 
 
  An interesting consequence of  definition~(\ref{eq:Markov0}) is that the exponential of the negative entropy production $e^{-S_{\rm tot}(t)}$ is a {\it martingale} associated with  the process $\vec{X}(t)$ \cite{chetrite2011two, neri2017statistics, pigolotti2017generic}. 
Historically the concept of martingales has been introduced to understand fundamental questions in betting strategies and gambling \cite{maitra2012discrete}. 
Martingale theory \cite{doob1953stochastic, williams1991probability, liptser2013statistics}, and in particular Doob's optional stopping theorem, provides an elegant resolution to the question whether  it is possible to make profit in a fair game of chance by  leaving the game  at a cleverly chosen moment.    We can distinguish unfair games of chances, where the expected values of the gambler's fortune decreases (or increases) with time, from fair ones, where such expected values are constant in time on average. In probability theory, these categories correspond to supermartingales (or submartingales) and martingales, respectively.
 In a nutshell,  the optional stopping theorem for martingales  states that a gambler  cannot make profit on average in a fair  game of chance by quitting the game at an intelligent chosen moment. 
  The optional stopping theorem  holds as long as the total amount of available money is finite.  A gambler with access to an infinite budget of money could indeed devise a betting strategy that makes profit out of a fair game; the St.~Petersburg game provides an example of a such a strategy, see \cite{risk1954exposition} and chapter 6 of \cite{bernstein1996against}.      Nowadays martingales have  various applications, for example, they model stock prices in efficient capital markets~\cite{leroy1989efficient}.

 In this paper, we study universal properties of entropy production in nonequilibrium stationary states using martingale theory and Doob's optional stopping theorem.   In an analogy with gambling, the  negative entropy production  $-S_{\rm tot}(t)$ of a stationary  process  $\vec{X}(t)$ is equivalent to  a gambler's fortune  in an unfair  game of chance $\vec{X}(t)$ and the exponentiated negative entropy production $e^{-S_{\rm tot}(t)}$ is a martingale associated with the gambler's fortune.   In stochastic thermodynamics,  Doob's optional stopping theorem  implies
 \begin{eqnarray}
\langle e^{-S_{\rm tot}(T)}\rangle =1,  \label{eq:strong}
\end{eqnarray} 
where the expected value $\langle \cdot \rangle$ is over many realizations of the physical process~$\vec{X}(t)$, and where $T$ is a stopping time.  A stopping time $T$  is the first time when a trajectory of  $\vec{X}$ satisfies a specific criterium;  it is thus a stochastic time. This criterium must obey causality and cannot depend on the future.   
  The relation (\ref{eq:strong}) holds under the condition that either $T$ acts in a finite-time window, i.e., $T\in [0,\tau]$ with $\tau$ a positive number, or that $S_{\rm tot}(t)$ is bounded for all times $t\in [0,T]$.      We call (\ref{eq:strong}) the {\it integral fluctuation relation for entropy production at stopping times} because it is an integral relation, $\langle e^{-S_{\rm tot}(T)}\rangle = \int {\rm d}\mathbb{P}\: e^{-S_{\rm tot}(T)} = 1$, that characterises the fluctuations of entropy production.    Here, $\mathbb{P}$ is the probability measure associated with~$\vec{X}(t)$.  
  
  The fluctuation relation at stopping times (\ref{eq:strong}) can be extended into a  fluctuation relation  conditioned on  trajectories $\left\{\vec{X}(t)\right\}_{t\in[0,T']}$ of random duration $[0,T']$, namely, 
 \begin{eqnarray}
\Big\langle e^{-S_{\rm tot}(T)}|\vec{X}^{T'}_0 \Big\rangle = e^{-S_{\rm tot}(T')},  \label{eq:strongCon}
\end{eqnarray} 
with $T'$ a stopping time for which $T'\leq T$, and where  $\vec{X}^s_0 = \left\{\vec{X}(t')\right\}_{t'\in[0,s]}$ denotes a trajectory in a finite-time window.   Notice that for $T'=0$ we obtain $\langle e^{-S_{\rm tot}(T)}| \vec{X}(0)\rangle  = e^{-S_{\rm tot}(0)} = 1$, since in our definitions $S_{\rm tot}(0) = 0$.  The fluctuation relation (\ref{eq:strongCon}) implies thus (\ref{eq:strong}).

There are two important implications of the integral fluctuation relations (\ref{eq:strong}) and~(\ref{eq:strongCon}).  First, it holds that 
\begin{eqnarray}
\langle  S_{\rm tot}(T)\rangle \geq 0 \label{eq:strong2}, 
\end{eqnarray} 
or in other words,  it is not possible to reduce the average   entropy production by stopping a stochastic process $\vec{X}(t)$ at a cleverly chosen moment $T$ that can be different for each realisation.    The relation (\ref{eq:strong2}) is reminiscent of the  second law of thermodynamics, and therefore we call it   the {\it second law of thermodynamics at stopping times}. 
 A second  implication  of the integral fluctuation relations (\ref{eq:strong}) and~(\ref{eq:strongCon}) is that certain fluctuation properties of entropy production are {\it universal}.     In what follows, we discuss in more detail these two consequences of the  integral fluctuation relations. 

 We first discuss the second law  at stopping times Eq.~(\ref{eq:strong2}).   Remarkably, this second law holds for {\em any} stopping time $T$ defined by a (arbitrary) stopping criterium that obeys causality and does not use information about the future of the physical process;  first-passage times are canonical examples of stopping times.   Interestingly, the inequality (\ref{eq:strong2}) bounds the average amount of heat and work a system can perform on  or extract from its surroundings at a stopping time  $T$.     For isothermal systems, Eq. (\ref{eq:strong2}) implies
\begin{eqnarray}
\langle Q (T )\rangle  \leq \mathsf{T}_{\rm env}  \langle  \Delta S_{\rm sys}(T )\rangle \label{eq:QT}.   
\end{eqnarray}    
Relation  (\ref{eq:QT}) states that the average amount of heat $\langle Q (T )\rangle$ a system can on average extract at a stopping time $T$  from a thermal reservoir 
at temperature $\mathsf{T}_{\rm env}$  is smaller or equal than the  average system entropy difference    $\langle  \Delta S_{\rm sys}(T )\rangle $ between the  initial state and the final state at the stopping time.     Similar considerations allow us to derive bounds on the average amount of work that a   stationary heat engine, e.g. Feynman's ratchet,   can extract from its surrounding when stopped at a cleverly chosen moment.  
Consider a system in contact with two thermal reservoirs at   temperatures $\mathsf{T}_{\rm h}$ and $\mathsf{T}_{\rm c}$ with $\mathsf{T}_{\rm h} \geq \mathsf{T}_{\rm c}$.   We define the {\em stopping-time efficiency} $\eta_{T}$   associated with the stopping time $T$ as 
\begin{eqnarray}
\eta_{T} := -\frac{\langle W(T)\rangle}{\langle Q_{\rm h}(T)\rangle},
\label{eq:6}
\end{eqnarray}
where $\langle W(T)\rangle$ is the average work exerted on the system in the time interval $[0,T]$, and $\langle Q_{\rm h}(T)\rangle$ is the average heat absorbed by the system from the hot reservoir  within the same time interval. 
If  $ \langle Q_{\rm h}(T)\rangle>0$, then
the second law at stopping times~(\ref{eq:strong2}) implies that 
  \begin{eqnarray} 
\eta_{T} \    \leq    \eta_{\rm C} -  \frac{\langle \Delta \mathcal{F}_{\rm c}(T)\rangle}{\langle Q_{\rm h}(T)\rangle }, \label{eq:carnotStoppxxxx}
  \end{eqnarray} 
  where $\eta_{\rm C} = 1 -(\mathsf{T}_{\rm c}/\mathsf{T}_{\rm h})$ is the Carnot efficiency,  $\langle \Delta \mathcal{F}_{\rm c}(t)\rangle= \langle \Delta v(t)\rangle -\mathsf{T}_{\rm c}\langle \Delta S_{\rm sys}(t)\rangle$ is the generalised  free energy change of the system at the stopping time $T$, and $\Delta v(t) = v(X(t)) - v(X(0))$ is  change of the  internal  energy of the system.       Note that the second term in the right-hand side of (\ref{eq:carnotStoppxxxx}) can be positive, and thus efficiencies at stopping times of stationary  heat engines  can be greater than the Carnot efficiency. This is because using a
stopping time the system is, in general, no longer  cyclic.

We now discuss   {\it universal} properties of the fluctuations of entropy production.   By applying the integral fluctuation relations~(\ref{eq:strong}) and (\ref{eq:strongCon})   to different examples of stopping times~$T$,  we will derive the following generic relations for the fluctuations of entropy production:

\begin{itemize}
\item In the simple case $T=t$,  where $t$ is a deterministic fixed time, relation (\ref{eq:strong2}) reads $\langle  S_{\rm tot}(t)\rangle \geq 0$, which is a well-known second-law like relation derived in stochastic thermodynamics~\cite{seifert2005entropy, seifert2012stochastic},  and the relation (\ref{eq:strong}) is the stationary integral fluctuation relation $\langle e^{-S_{\rm tot}(t)}\rangle =1$ \cite{seifert2005entropy, seifert2012stochastic}.   The integral fluctuation relation at fixed times implies that in a nonequilibrium process events of negative entropy production must exist and  their likelihood is bounded by \cite{seifert2012stochastic}
\begin{eqnarray}
\mathbb{P}\left(S_{\rm tot}(t)\leq  -s \right) \leq  e^{-s}, \quad s\geq 0 ,\label{eq:markovxa}
\end{eqnarray} 
where $\mathbb{P}\left(\cdot\right)$ denotes the probability of an event.

\item Our  second choice of   stopping times are first-passage times  $T_{\rm fp} = {\rm inf}\left\{t: S_{\rm tot}(t) \notin (-s_-,s_+)\right\}$ for entropy production with  two absorbing boundaries at $-s_-\leq 0$ and $s_+\geq 0$.
As we show in this paper, the integral fluctuation relation Eq.~(\ref{eq:strong}) implies that  the splitting  probabilities $p_- = \mathbb{P}\left[S_{\rm tot}(T) \leq -s_-\right]$ and $p_+ = \mathbb{P}\left[S_{\rm tot}(T) \geq  s_+\right]$ are bounded by
\begin{eqnarray}
p_+ \geq  1 - \frac{1}{e^{s_-}-e^{-s_+}} ,  \quad   p_- \leq    \frac{1}{e^{s_-}-e^{-s_+}}. \label{eq:split2}
\end{eqnarray}   
If the  trajectories of entropy production are continuous, then \cite{neri2017statistics}
\begin{eqnarray}
p_+ =  \frac{e^{s_-}-1}{e^{s_-}-e^{-s_+}}, \quad p_- =  \frac{1-e^{-s_+}}{e^{s_-}-e^{-s_+}}. \label{eq:split1}
\end{eqnarray}

\item    Global infima of entropy production, $S_{\rm inf} = {\rm inf}_{t\geq 0}\:S_{\rm tot}(t)$,  quantify  fluctuations of   negative entropy production.    The cumulative distribution $\mathbb{P}\left[S_{\rm inf} \leq  -s \right]$  is equal to the splitting probability $p_-$ in the limit $s_+\rightarrow \infty$.   Using (\ref{eq:split2}) we obtain \cite{neri2017statistics}
\begin{eqnarray}
\mathbb{P}\left[S_{\rm inf} \leq  -s \right] \leq e^{-s}, \quad s\geq 0, \label{eq:cumInf}
\end{eqnarray} 
which implies the  {\it infimum law} $\langle S_{\rm inf}\rangle \geq -1$ \cite{neri2017statistics}.     It is insightful to compare the two relations (\ref{eq:markovxa}) and (\ref{eq:cumInf}).    
Since $S_{\rm inf} \leq S_{\rm tot}(t)$, the inequality (\ref{eq:cumInf}) implies the inequality (\ref{eq:markovxa}), and (\ref{eq:cumInf}) is thus a stronger result.    Remarkably, the bound~(\ref{eq:cumInf})  is tight for continuous stochastic processes.  Indeed, using (\ref{eq:split1})   we obtain the probability density function  for global infima of the entropy production in continuous processes~\cite{neri2017statistics}, 
\begin{eqnarray} 
p_{S_{\rm inf}}(-s) = e^{-s}, \quad s\geq 0. \label{eq:globInf}
\end{eqnarray}
The mean global infimum is thus $\langle S_{\rm inf}\rangle = -1$.

\item    The survival probability $p_{\rm sur}(t) $ of the  entropy production is the probability that entropy production has not reached a value $s_0$ in the time interval $[0,t]$.       For continuous stochastic processes we obtain the generic expression
\begin{eqnarray}
p_{\rm sur}(t) = \frac{e^{-s_0}  - 1}{e^{-s_0} - \langle e^{-S_{\rm tot}(t)} \rangle_{\rm sur}  }, \label{eq:survxxxxxx}
\end{eqnarray} 
where $\langle \dots \rangle_{\rm sur} $ is an average over trajectories that have not reached the absorbing state in the interval $[0,t]$.

\item   We consider the statistics of the    number of times, $N_{\times}$, that  entropy production crosses  the interval $[-\Delta ,\Delta]$ from $-\Delta$ towards $\Delta$  in one realisation of the process~$\vec{X}(t)$.   The probability of $N_{\times}$ is bounded by 
\begin{eqnarray}
\mathbb{P}(N_\times=0;\Delta)&\ge& 1- e^{-\Delta}, \nonumber\\
\mathbb{P}(N_\times>n)&\le& e^{-\Delta (2n+1)}. \label{ineq:crossing}
\end{eqnarray}
In other words, the probability of observing a large number of crossing decays  at
least exponentially in $N_\times$.
For continuous stochastic processes we obtain a generic expression for the probability  of  $N_\times$ \cite{pigolotti2017generic}, given by
\begin{equation}
\mathbb{P}(N_\times=n)= \left\{
\begin{array}{lr}
1-e^{-\Delta} & n= 0, \\
2\sinh ( \Delta) e^{-2 n\Delta} &n\ge 1.
\end{array}\right. \label{eq:crossing}
\end{equation}
\end{itemize} 
Remarkably, all these results on universal fluctuation properties  are direct  consequences of the integral fluctuation relation for entropy production at stopping times Eq.~(\ref{eq:strong}) and its conditional version, Eq.~(\ref{eq:strongCon}).      

Some of the  results in this paper have already appeared before in the literature or are closely related to existing results. A fluctuation relation analogous to (\ref{eq:strong}) has been derived for the exponential of the negative housekeeping heat, see Eq.~(6) in \cite{chetrite2018martingale}. Since for stationary processes the housekeeping heat is equal to the entropy production, the relation (6) in \cite{chetrite2018martingale} implies the relation (\ref{eq:strong}) in this paper. The relations (\ref{eq:split1}), 
(\ref{eq:cumInf}), (\ref{eq:globInf}) and (\ref{eq:crossing})  have been derived before in  \cite{neri2017statistics} and \cite{pigolotti2017generic}.   Instead, the relations (\ref{eq:strongCon}),  (\ref{eq:QT}), (\ref{eq:carnotStoppxxxx}),  (\ref{eq:split2}), (\ref{eq:survxxxxxx}), and (\ref{ineq:crossing}) are, to the best of our knowledge, derived here for the first time.  Moreover, we demonstrate that all the results (\ref{eq:strongCon}), (\ref{eq:QT}), (\ref{eq:carnotStoppxxxx}), (\ref{eq:split2}), (\ref{eq:split1}), (\ref{eq:cumInf}), (\ref{eq:globInf}), (\ref{eq:survxxxxxx}) and (\ref{eq:crossing}) descend from the integral fluctuation relation  fluctuation relations (\ref{eq:strong}) and (\ref{eq:strongCon})  in a few simple steps, and we discuss the physical meaning of the results derived in this paper on     examples of simple nonequilibrium systems.

The paper is organised as follows.   Section \ref{Sec:notation} introduces the notation used in the paper.   In Section~\ref{sec:casino},  we revisit the theory of  martingales in the context of gambling. In Section~\ref{Sec:StochThermo}, we briefly recall the theory of stochastic thermodynamics, focusing on the aspects we will use in this paper.   These two sections only contain review material, and can be skipped by readers who want to directly read  the new results of this paper.      In  Section~\ref{Sec:Results},  we derive the first important  results of this paper:   the integral fluctuation relations at stopping times~(\ref{eq:strong}) and (\ref{eq:strongCon}).    In Section~\ref{sec:second}, we derive the second law of thermodynamics at stopping times~(\ref{eq:strong2}), and we discuss the physical implications of this  law. In Section \ref{Sec:consequences}, we use the integral fluctuation relation at stopping times to derive universal  properties for the fluctuations of  entropy production in nonequilibrium stationary states, including the relations (\ref{eq:split2})-(\ref{eq:crossing}).      In Section~\ref{sec:finiteStat}, we discuss the effect of finite statistics  on the integral fluctuation relation at stopping times, which is relevant for  experimental validation.   In Section~\ref{Sec:illustration}, we illustrate the second law at stopping times and the integral fluctuation relation at stopping times  in paradigmatic examples of nonequilibrium stationary states.  
 We conclude the paper with a discussion in Section~\ref{sec:disc}.   In the Appendices, we provide details on important proofs and derivations.

\section{Preliminaries and notation}\label{Sec:notation}

In this  paper, we will consider stochastic processes described by $d$ degrees of freedom $\vec{X}(t) = (X_1(t), X_2(t), \ldots, X_d(t))$.    The time index can be discrete, $t\in \mathbb{Z}$, or continuous,  $t\in\mathbb{R}$.  We denote  the full trajectory of $\vec{X}(t)$ by $ \omega  = \left\{\vec{X}(t)\right\}_{t\in(-\infty, \infty)}$   and  the set of all trajectories by $\Omega$.       
  We call a subset $\Phi$  of $\Omega$ a measurable set, or an event, if we can measure the probability $\mathbb{P}\left[\Phi\right]$ to observe a trajectory $\omega$ in $\Phi$.    The $\sigma$-algebra $\mathscr{F}$ is the set of all    subsets of $\Omega$  that are  measurable.

   The triple $(\Omega, \mathscr{F}, \mathbb{P})$ is a probability space.  
We denote random variables on this probability space in  upper case, e.g.,  $X, Y, Z$, whereas deterministic   variables are written in lower case letters, e.g., $x,y,z$.   An exception is the temperature $\mathsf{T}_{\rm env}$, which is a deterministic  variable.     Random variables are functions defined on $\Omega$, i,e, $X:\omega\rightarrow X(\omega)$.  For simplicity we often omit the $\omega$-dependency in our notation of random variables, i.e., we write $X = X(\omega)$, $Y = Y(\omega)$, etc.   For stochastic process, $\vec{X}(t) = \vec{X}(t,\omega)$ is a function on $\Omega$ that returns the value of $\vec{X}$  at time $t$  in the trajectory $\omega$.   
        The expected value of a random variable $X$  is denoted by $\langle X \rangle$  or $E[X]$ and is defined as the integral $\langle X \rangle = E[X]= \int_{\Omega} d\mathbb{P}\: X(\omega)$ in the probability space $(\Omega, \mathscr{F}, \mathbb{P})$.    We write $p_X(x)$ for the probability density function or probability mass function of $X$, if it exists.       We denote vectors by $\vec{x} = (x_1, x_2, \ldots, x_d)$ and we use the notation $t \wedge \tau= {\rm min}  \left\{t,\tau\right\}$.

       We will consider situations where an experimental observer does not have instantaneously access to the complete trajectory $\omega$  but rather tracks a trajectory  $\left\{\vec{X}(s)\right\}_{s\in[0,t]} = \vec{X}^t_0$ in  a finite time interval.  In this case, the  
 set of measurable events gradually expands as time progresses and a larger part of the trajectory $\omega$ becomes visible.      Mathematically this situation is described by an increasing sequence of  $\sigma$-algebras $\left\{\mathscr{F}_t\right\}_{t\geq 0}$  where $\mathscr{F}_t$ contains all the measurable events  $\Phi$  associated with finite-time trajectories   $\vec{X}^s_0$.      The sequence of sub $\sigma$-algebras $\left\{\mathscr{F}_t\right\}_{t\geq 0}$ of $\mathscr{F}$ is called  the  {\em filtration} generated by the stochastic process $X(t)$ and $(\Omega, \mathscr{F}, \left\{\mathscr{F}_t\right\}_{t\geq 0} \mathbb{P})$ is a filtered probability space.   If time is continuous, then we assume that  $\left\{\mathscr{F}_t\right\}_{t\geq 0}$ is right-continuous, i.e., $\mathscr{F}_t = \cap_{s> t}\mathscr{F}_s$; this implies that  the process $\vec{X}(t)$ consists of continuous trajectories intercepted by a  discrete number of jumps.

We denote by  $E[M(t)|\mathscr{F}_s](\omega)$ the  conditional expectation of a random variable $M(t)$ given a sub-$\sigma$-algebra $\mathscr{F}_s$ of $\mathscr{F}$  \cite{liptser2013statistics, williams1991probability}.   Note that conditional expectations  $E[M(t)|\mathscr{F}_s](\omega)$ are random variables on the measurable space $(\Omega, \mathscr{F}_s)$.   Since $E[M(t)|\mathscr{F}_s](\omega)$ is an expectation value we  also  use  the physics notations $E[M(t)|\mathscr{F}_s] = \langle M(t)|\mathscr{F}_s \rangle$ and $E[M(t)|\mathscr{F}_s] =  \langle M(t)|\vec{X}^s_0   \rangle $.     
\section{Martingales}

\label{sec:casino}
In a first subsection, we introduce martingales within the example of games of chance, to illustrate  how fluctuations of a stochastic process can be studied with martingale theory.   The calculations in this subsection are  similar to those  for the stochastic  entropy production presented in       Sections~\ref{Sec:Results} and \ref{Sec:consequences}, with the difference that game of chances are simpler to analyse, since they consist of sequences of independent and identically distributed random variables.    In a second subsection,  we present a  definition of martingales that applies to stochastic processes in continuous and discrete time, and we discuss the optional stopping theorem.

\subsection{Gambling and martingales}
Games of chance have inspired mathematicians as far back as the 17th century and have laid the foundation for probability theory~\cite{hald2003history}.      A question  that has often been studied is the gambler's ruin problem:
Consider a gambler that enters a casino and tries his/her luck at the roulette.   The gambler plays until he/she has either won a predetermined amount of money or until all the initial stake is lost.    We are interested in the probability of success, or equivalently in the ruin probability of the gambler.

The roulette is a game of chance that consists of a ball that rolls at the edge of a  spinning wheel and falls in one of 37 coloured pockets on the wheel: 18 pockets are coloured in red, 18 pockets are coloured in black, and one is coloured in green.  Before each round of the game, the gambler  puts his/her bet on whether the ball will fall in either a red or a black pocket.   If the gambler's call is correct, then he/she wins an amount of  chips equal to the bet size, otherwise he/she looses the betted chips.   The gambler cannot bet for the green pocket.   The presence of the green pocket biases the game in favour of the casino: if the ball falls in the green pocket, then the casino wins all the bets.  We are interested in the gambler's ruin problem: what is the probability that the gambler loses all of his/her initial stakes before reaching a certain amount of profit?
  
A gambling sequence at the roulette can be formalised as a stochastic process $X(t)$ in discrete time, $t=1,2,3,\ldots$   We define $X(t) = 1$ if the ball falls in a red pocket and  $X(t) = -1$ if the ball falls in a black pocket in the $t$-th round of the game.   If the ball falls in a green pocket we set $X(t) = 0$.     We denote the bets of the gambler by the process $Y(t)$: if the gambler calls for red we set  $Y(t) = 1$ and if the gambler calls for black we set 
 $Y(t) = -1$.     The gambler does not bet on green.   Finally, we assume the bet size of the gambler $b$ is constant.     
 
For an ideal roulette, the random variables $X(t)$ are independently drawn from the distribution 
\begin{eqnarray}
p_{X(t)}(x) = \frac{18}{37} \delta_{x,1} + \frac{18}{37} \delta_{x,-1}  +  \frac{1}{37} \delta_{x,0},
\end{eqnarray} 
where $\delta_{x,y}$ is the Kronecker's delta. 
The gambler's bet  $Y(t) = y(X(0), X(1),\ldots, X(t-1))$ with $y$ a function that defines the gambler's betting system.  
 The gambler's fortune  at time $t$ is the process  
 \begin{eqnarray}
F(t) = n +b \sum^t_{s=1}  (Y(s)X(s) + |X(s)| -1), 
\end{eqnarray} 
where $F(0) = n$ is the  initial stake.

The duration of the game is random.   The gambler plays until  a time  $T_{\rm play}$ when the gambler is either ruined, i.e., $F(T_{\rm play}) \leq 0$,
 or  the gambler's fortune has surpassed for the first time a certain amount  $m$, i.e.,  $F(T_{\rm play})\geq m$.  Clearly we require that $m>n = F(0)$, since otherwise $T_{\rm play} = 0$.
 The ruin probability 
\begin{eqnarray}
p_{\rm ruin}(n) = \mathbb{P}\left[F(T_{\rm play} ) \leq 0\right]
\end{eqnarray}
is  the probability that the gambler loses the game. 

The gambler's fortune is a {\it supermartingale} because it is a bounded stochastic process  satisfying
  \begin{eqnarray}
E\left[F(t)|X(0), X(1), \ldots, X(s)\right] \leq F(s), \label{eq:Sub}
\end{eqnarray}  
for all $s\leq t$ and all $t\geq 0$.  
Relation (\ref{eq:Sub}) means that on average the gambler will inevitably lose money, irrespective of the betting system he/she adopts.

A  gambler whose fortune is expected to decrease may still be tempted to play if the  probability of winning is  high enough.  The  probabilities to win or loose the game depend on the fluctuations of  $X(t)$. 
  In the roulette game, the gambler's fortune $F(t)$ can be represented as a biased random walk on the real line $[0,m]$, which starts at the position $n$, and moves each time step either a distance $-b$ to the left or a distance $b$ to the right.     The probability to make a step to the left is $q = 19/37\approx 0.51$ and the probability to make a step to the right is $1-q = 18/37\approx 0.49$.   Hence, the gambler's fortune is slightly biased to move towards the left where $F(t)<0$.     
     The ruin probability $p_{\rm ruin}(n) = \mathbb{P}\left[F(T_{\rm play}) \leq 0\right]$  solves the recursive equation 
   \begin{eqnarray}
p_{\rm ruin}(n) = q \:p_{\rm ruin}(n-b) + (1-q)\, p_{\rm ruin}(n+b) , \quad n\in[0,m], \label{eq:harmx}
   \end{eqnarray}
   with boundary conditions $p_{\rm ruin}(0) = 1$ and $p_{\rm ruin}(m) = 0$.   
  Instead of solving the relations (\ref{eq:harmx}) we   bound the ruin probability $p_{\rm ruin}(n)$ using the theory of martingales~\cite{doyle1984random}.     We define the process
\begin{eqnarray}
M(t) = \left(\frac{q}{1-q}\right)^{F(t)/b}, \label{eq:martingalFort}
\end{eqnarray}
The processes $M(t)$ is a \textit{martingale} relative to  the process $X(t)$ \cite{liptser2013statistics, williams1991probability}.    Indeed, we say that a bounded  process $M$ is a martingale   if 
  \begin{eqnarray}
E\left[M(t)|X(0), X(1), \ldots, X(s)\right] = M(s), \label{eq:MMart}
\end{eqnarray}  
for all $s<t$.

An important property of martingales is that their   expected value evaluated at a stopping time  $T$ of the process $X$ equals  their  expected value at the initial time~\cite{ liptser2013statistics, williams1991probability, doob1953stochastic}, 
\begin{eqnarray}
E\left[M(T)\right] = E[M(0)]. \label{eq:Dooboptxaxxx}
\end{eqnarray}  
Eq.~(\ref{eq:Dooboptxaxxx}) is known as {\it Doob's optional stopping theorem} and will constitute the main tool in this paper to derive   fluctuation properties of stochastic processes. 
In the present example, since 
\begin{eqnarray}
 E\left[M(0)\right] =  \left(\frac{q}{1-q}\right)^{n/b},
\end{eqnarray} 
and since for $q\geq 0.5$
\begin{eqnarray}
\fl p_{\rm ruin}(n) + (1-p_{\rm ruin}(n))\left(\frac{q}{1-q}\right)^{m/b+1}  \geq E\left[M(T_{\rm play})\right] \geq  p_{\rm ruin}(n)  + (1-p_{\rm ruin}(n))\left(\frac{q}{1-q}\right)^{m/b},\nonumber\\
\end{eqnarray} 
Doob's optional stopping theorem (\ref{eq:Dooboptxaxxx}) implies that
\begin{eqnarray}
\frac{ \left(\frac{q}{1-q}\right)^{m/b+1} - \left(\frac{q}{1-q}\right)^{n/b}}{\left(\frac{q}{1-q}\right)^{m/b+1}-1} \geq  p_{\rm ruin}(n) \geq \frac{\left(\frac{q}{1-q}\right)^{m/b} -  \left(\frac{q}{1-q}\right)^{n/b}}{\left(\frac{q}{1-q}\right)^{m/b}-1}. \label{eq:ruin}
\end{eqnarray} 
Hence, Doob's optional stopping theorem    bounds the gambler's ruin probability.  

The formula (\ref{eq:ruin}) provides useful information for the gambler.  
If we start the game with an initial fortune of $10\pounds$, if we play until our fortune reaches $50\pounds$ and if we bet each game $1\pounds$,  then the chance of loosing our initial stake is in between  $94.8$ and $95.2$ percent.      If, on the other hand, we bet each game $10\pounds$, then the ruin probability is in between $82$  and  $85.6$ percent.  Hence,  the probability of winning increases as a function of  the betting size.   Indeed, since the game is biased in favour of the casino, the best strategy is to reduce the number of betting rounds  to the minimal possible and hope that luck plays in our favour.  After all, the outcome of a single game is almost  fair, since  the odds of winning   a single game are $q = 19/37\approx 0.51$.

\subsection{Martingales and the optional stopping theorem}
\label{app:martingales}
We now discuss martingales and the optional stopping theorem for generic stochastic processes $\vec{X}(t)$  in  discrete or continuous time.  
A {\it martingale} process $M(t)$ with respect to another  process $\vec{X}(t)$ is a real-valued  stochastic process that satisfies the following three properties \cite{liptser2013statistics, williams1991probability}: 
\begin{itemize}
\item $M(t)$ is  $\mathscr{F}_t$-adapted, which means that $M(t)$ is a function on trajectories $\vec{X}(0,\ldots, t)$;
\item $M(t)$ is integrable, 
\begin{eqnarray}
E\left[|M(t)|\right] <\infty,\quad  \forall t\geq 0; \label{eq:integrabl}
\end{eqnarray}
 \item the conditional expectation of $M(t)$ given the $\sigma$-algebra $\mathscr{F}_s$ satisfies the property
 \begin{eqnarray}
E[M(t)|\mathscr{F}_s]= M(s),\quad  \forall s\leq t, \quad {\rm and} \quad    \forall t \geq 0.\label{eq:equalxo}
\end{eqnarray}    
The conditional expectation   of a random variable $M(t)$ given a sub-$\sigma$-algebra $\mathscr{F}_s$ of $\mathscr{F}$ is defined as a  $\mathscr{F}_s$-measurable random variable $E[M(t)|\mathscr{F}_s]$  for which $\int_{\omega\in\Phi}  E[M(t)|\mathscr{F}_s]\:{\rm d}\mathbb{P}  = \int_{\omega\in\Phi}M(t) \:{\rm d}\mathbb{P}$ for all $\Phi \in\mathscr{F}_s$  \cite{liptser2013statistics}.  
\end{itemize}
If  instead of the equality  (\ref{eq:equalxo}) we have an inequality $E[M(t)|\mathscr{F}_s]\ \geq  M(s)$, then we call the process a {\it submartingale}.    If $E[M(t)|\mathscr{F}_s]\leq  M(s)$,
then we call the process a {\it supermartingale}.

Fluctuations of a martingale $M(t)$ can be studied with stopping times. 
Stopping times are the random times when the stochastic process $\vec{X}(t)$ satisfies for the first time a given criterion. Stopping times do not allow for clairvoyance (the stopping criterion cannot anticipate the future) and do not allow for cheating (the criterion does not have access to side information).  Aside these constraints, the stopping rule can be an arbitrary  function of the trajectories of the stochastic process $\vec{X}(t)$.        

Formally, a stopping time $T(\omega)\in[0,\infty]$ of a stochastic process $\vec{X}$ is defined as a  random variable for which $\left\{\omega:T(\omega)\leq t\right\}\in\mathscr{F}_t$ for all $t\geq 0$.    Alternatively, we can also define stopping times as functions on trajectories $\omega = \vec{X}^{\infty}_0$ with the property that  the function $T(\omega)$ does not depend on what happens after the stopping time $T$.  

An important result in martingale theory is {\it Doob's optional stopping theorem}.
There exist different versions of the optional stopping theorem, which differ in the  conditions assumed for the martingale process $M(t)$ and the stopping time $T$.    We discuss the version of the theorem presented as Theorem 3.6 in the  book of  Liptser and  Shiryayev~\cite{liptser2013statistics}.

Let $M(t)$  be a martingale relative to the process $\vec{X}$ and let $T$ be a stopping time on the process $\vec{X}$.    If $M$ is uniformly integrable, then   
\begin{eqnarray}
E\left[M(T)\right] = E\left[M(0)\right] . \label{eq:Doob1}
\end{eqnarray}  
If time is continuous we also require that $M(t)$ is rightcontinuous. 
A stochastic process $M(t)$ is called uniformly integrable if 
\begin{eqnarray}
\lim_{m\rightarrow \infty} {\rm sup}_{t \geq 0} \int |M(t)|  I_{|M(t)|\geq m}\:{\rm d}\mathbb{P} = 0 , 
\end{eqnarray}
where $I_{\Phi}(\omega)$ is the indicator function  defined by
\begin{eqnarray}
I_{\Phi}(\omega) = \left\{ \begin{array}{ccc} 1 &{\rm if}& \omega\in \Phi,\\ 0&{\rm if}&\omega\notin \Phi, \end{array}\right.  \label{eq:ind}
\end{eqnarray}
for all $\omega\in\Omega$ and $\Phi\in\mathscr{F}$.    If $M(t)$ is not uniformly integrable, then (\ref{eq:Doob1}) may not hold.  For example if  $M(t) = W(t)$ with $W(t)$ a  Wiener process on $[0,\infty)$ and $T = {\rm inf}\left\{t\geq 0: M(t) = m\right\}$, then $E\left[W(T)\right] = m \neq  E\left[W(0)\right] = 0$, where we have used the convention that $0\cdot \infty = 0$ \cite{tao2011introduction}. 

An extended version of the optional stopping theorem holds for two stopping times
$T_1$ and $T_2$  with the property $\mathbb{P}[T_2 \leq T_1] = 1$, 
\begin{eqnarray}
E\left[M(T_1)|\mathscr{F}_{T_2}\right](\omega) = M(T_2(\omega),\omega),  \label{eq:Doob2}
\end{eqnarray}
where the  $\sigma$-algebra $\mathscr{F}_{T_2}$ consists of all sets $\Phi\in \mathscr{F}$ such that $\Phi \cap \left\{\omega:T_{2}(\omega)\leq t\right\} \in \mathscr{F}_t $.

\section{Stochastic thermodynamics for stationary processes}\label{Sec:StochThermo}
In this section, we briefly introduce the formalism of stochastic thermodynamics in nonequilibrium stationary states; for reviews see  Refs.~\cite{maes2003origin, sekimoto2010stochastic, jarzynski2011equalities, seifert2012stochastic}.  
We use a probability-theoretic approach~\cite{maes2000definition, jiang2003mathematical, neri2017statistics}, which  has the advantage of dealing with Markov chains, Markov jump processes, and Langevin process in one unified framework. It is moreover the natural language to deal with martingales.

The stochastic entropy production $S_{\rm tot}(t)$ is defined in terms of a  probability measure $\mathbb{P}$ of a stationary stochastic process and its  time-reversed measure  $\mathbb{P}\circ\Theta$.     The time-reversal map $\Theta$, with respect to the origin $t=0$, is a measurable involution on trajectories $\omega$  with the property that 
    $X_i(t, \Theta(\omega)) = X_i(-t,\omega)$ for  variables of even parity with respect to time reversal and $X_i(t, \Theta(\omega)) =  -X_i(-t,\omega)$ for variables of odd parity with respect to time reversal.     We say that the measure $\mathbb{P}$ is stationary if $\mathbb{P} = \mathbb{P}\circ\mathcal{T}_t$ for all $t\in\mathbb{R}$, with $\mathcal{T}_t$ the time-translation map, i.e., $\vec{X}(t',\mathcal{T}_t(\omega)) =\vec{X}(t'+t, \omega)$ for all $t'\in\mathbb{R}$.
In order to define an entropy production  we require that the process $\vec{X}(t,\omega)$  is reversible.  This means that for all finite $t\geq 0$ and for all $\Phi\in\mathscr{F}_t$ it holds that $\mathbb{P}\left[\Phi\right]=0$ if and only if $(\mathbb{P}\circ\Theta)\left[\Phi\right]=0$.   In other words, if an event happens with zero probability in the forward dynamics, then this event also occurs with zero probability  in the time-reversed dynamics.    In probability theory, one says that  $\mathbb{P}$ and $\mathbb{P}\circ\Theta$ are locally mutually absolute continuous.
 
Given the above assumptions, we define the entropy production in a stationary process $\vec{X}$ by~\cite{lebowitz1999gallavotti, maes2000definition, jiang2003mathematical, neri2017statistics}
\begin{eqnarray}
S_{\rm tot}(t,\omega) := \log \frac{\left.{\rm d}\mathbb{P}\right|_{\mathscr{F}_t}}{\left.{\rm d}(\mathbb{P}\circ\Theta)\right|_{\mathscr{F}_t}}(\omega),\quad t\geq0, \quad \omega\in\Omega,  \label{eq:SGeneral}
\end{eqnarray}       
where we have used the Radon-Nikodym derivative of   the restricted measures $\left.\mathbb{P}\right|_{\mathscr{F}_t}$ and $\left.(\mathbb{P}\circ\Theta)\right|_{\mathscr{F}_t}$ on $\mathscr{F}$.     The restriction of a measure $\mathbb{P}$ on a sub-$\sigma$-algebra  $\mathscr{F}_t$ of $\mathscr{F}$ is defined  
 by $\left.\mathbb{P}\right|_{\mathscr{F}_t}[\Phi] = \mathbb{P}[\Phi]$ for all $\Phi\in\mathscr{F}_t $.    
 If $t$ is  continuous, then $S_{\rm tot}(t, \omega)$ is rightcontinuous, since we have assumed the rightcontinuity of the filtration $\left\{\mathscr{F}_t\right\}_{t\geq0}$.
  Local mutual  absolute continuity of the two measures $\mathbb{P}$ and $\mathbb{P}\circ\Theta$ implies that the Radon-Nikodym derivative in (\ref{eq:SGeneral}) exists and is almost everywhere uniquely defined. 
   The definition (\ref{eq:SGeneral})  states that entropy production is  the probability density of the measure $\mathbb{P}$ with respect to the time-reversed measure $\mathbb{P}\,\circ\,\Theta$; it is a functional of  trajectories  $\omega$ of the stochastic process $X$ and characterises their time-irreversibility. 
 
The definition (\ref{eq:SGeneral}) of the stochastic entropy production is general.  It applies to Markov chains, Markov jump processes, diffusion processes, etc.  
For Markov chains, the relation (\ref{eq:SGeneral})  is equivalent to the expression (\ref{eq:Markov0}) for entropy production in terms of  probability density functions of trajectories.  Consider for example the case of $\vec{X}(t)\in\mathbb{R}^d$ and $t\in \mathbb{Z}$ and let us assume for simplicity that all degrees of freedom are of even parity with respect to time reversal.  Using $\left.{\rm d}\mathbb{P}\right|_{\mathscr{F}_t} = p(\vec{X}(1), \vec{X}(2), \ldots, \vec{X}(t))\left.{\rm d}\lambda\right|_{\mathscr{F}_t}$, with $\left.\lambda\right|_{\mathscr{F}_t}$ the Lebesgue measure on $\mathbb{R}^{td}$, the entropy production is indeed of the form given by relation (\ref{eq:Markov0}).
However, formula (\ref{eq:SGeneral}) is  more general than (\ref{eq:Markov0}) because it also applies to cases where the path probability density with respect to a Lebesgue measure does not exist, as is the case with stochastic processes in continuous time.  

 For systems that are weakly coupled to one or more environments in equilibrium, the entropy production (\ref{eq:SGeneral}) is  equal to  \cite{seifert2012stochastic}
\begin{eqnarray}
S_{\rm tot}(t) =  \Delta S_{\rm sys}(t) + S_{\rm env}(t) , \label{eq:Stotxx}
\end{eqnarray}
where $S_{\rm env}(t)$ is the entropy change of the environment, and  where 
\begin{eqnarray}
 \Delta S_{\rm sys}(t) = - \log \frac{p_{\rm ss}(\vec{X}(t))}{p_{\rm ss}(\vec{X}(0))}\label{eq:Ssys}
\end{eqnarray}
is the  system entropy change associated with     the  stationary probability density function 
 $ p_{\rm ss}(\vec{X}(t))$ of $\vec{X}(t)$.

\section{Integral fluctuation relations at stopping times}\label{Sec:Results}
In this section we initiate  the   study of the fluctuations of the entropy production in  stationary processes. We follow 
 an approach  similar 
to the one presented in Section~\ref{sec:casino} for the fluctuations of a gambler's fortune, namely, we first identify a martingale process  related to the   entropy production, which is the exponentiated negative entropy $e^{-S_{\rm tot}(t)}$,  and we then apply Doob's optional stopping theorem (\ref{eq:Doob1}) to this martingale process.   
   Since $e^{-S_{\rm tot}(t)}$ is unbounded, we require uniform integrability of $e^{-S_{\rm tot}(t)}$ in order to apply Doob's optional stopping theorem~(\ref{eq:Doob1}).   Therefore,  we obtain two versions of the integral fluctuation relation at stopping times: a first version holds within finite-time windows and a second version holds when $S_{\rm tot}(t)$ is bounded for all times  $t\leq T$.   These two versions represent two different ways to ensure that the total available entropy in the system and environment is finite.   
   Note that  when we applied Doob's optional stopping theorem in the gambling problem, we also required that the gambler's fortune is finite.     Finally, we obtain  conditional  integral fluctuation relations by applying  the conditional version  (\ref{eq:Doob2}) of Doob's optional stopping theorem to  $e^{-S_{\rm tot}(t)}$.    
\subsection{The  martingale structure of exponential entropy production}
The exponentiated negative entropy production  $e^{-S_{\rm tot}(t)}$ associated with a stationary stochastic process $\vec{X}(t)$ is a martingale process relative to  $\vec{X}(t)$.  
 Indeed,   $S_{\rm tot}(t)$ is a $\mathscr{F}(t)$-adapted process, $E\left[e^{-S_{\rm tot}(t)}\right] =1$, and    in \ref{Append1}  we show that   \cite{chetrite2011two, neri2017statistics}
\begin{eqnarray}
E\left[e^{-S_{\rm tot}(t)}|\mathscr{F}_s\right] = e^{-S_{\rm tot}(s)}, \quad  s\leq t. 
\end{eqnarray}
As a consequence, entropy production is a submartingale: 
\begin{eqnarray}
E\left[S_{\rm tot}(t)|\mathscr{F}_s\right]  \geq S_{\rm tot}(s), \quad  s\leq t. 
\end{eqnarray}
Notice that we can draw an analogy between thermodynamics and gambling by identifying 
 the negative entropy production with a gambler's fortune and by identifying 
the exponential $e^{-S_{\rm tot}(t)}$ with the martingale (\ref{eq:martingalFort}). 
 
 \subsection{Fluctuation relation at stopping times within a finite-time window}
 We  apply Doob's optional stopping theorem~(\ref{eq:Doob1}) to the martingale $e^{-S_{\rm tot}(t)}$.    We consider first
the case  when an experimental observer measures a  stationary stochastic processes $\vec{X}(t)$ within a finite-time window $t\in[0,\tau]$.     In this case, the experimental observer  measures in fact  the process $e^{-S_{\rm tot}(t\wedge \tau)}$, where we have used the notation 
\begin{equation}
t \wedge \tau = {\rm min}  \left\{t,\tau\right\}.
\end{equation}
  The process $e^{-S_{\rm tot}(t\wedge \tau)}$ is uniformly integrable, as we show in the \ref{Append1}, and therefore
   \begin{eqnarray}
   \left\langle e^{-S_{\rm tot}(T\wedge \tau)}\right\rangle = 1, \label{stoppingInt}
   \end{eqnarray}    
   holds for all stopping times $T$  of $\vec{X}(t)$.  
  
  \subsection{Fluctuation relation at stopping times within an infinite-time window} 
  We discuss an integral fluctuation relation for stopping times within an infinite-time window, i.e., $T\in[0,\infty]$.   In the  \ref{Append2} we prove that if the conditions
   \begin{enumerate}[(i)]  
  \item  $e^{-S_{\rm tot}}$ converges $\mathbb{P}$-almost surely to $0$ in the limit $t\rightarrow \infty$   
  \item  $S_{\rm tot}(t)$ is  bounded  for all $t\leq T$
      \end{enumerate}
      are met, then   \begin{eqnarray}
   \left\langle e^{-S_{\rm tot}(T)}\right\rangle = 1.\label{stoppingIntxxx}
   \end{eqnarray}     
      The condition (i) is a  reasonable assumption  for nonequilibrium stationary states, since $\langle S_{\rm tot}(t)\rangle$ grows extensive in time as $\langle S_{\rm tot}(t)\rangle = \sigma t$ with $\sigma$ a  positive number.    The condition (ii) can be imposed on $T$ by considering   a 
      stopping time $T\wedge T_{\rm fp}$  where $T_{\rm fp} = {\rm inf}\left\{t: S_{\rm tot}(t) \notin (-s_-,s_+)\right\}$  is a first-passage time with two thresholds $s_-, s_+ \gg 1$, which can be considered  large compared to the typical values of entropy production at the stopping time~$T$.

   \subsection{Conditional integral fluctuation relation at stopping times} 
We can also apply the conditional optional stopping-time theorem (\ref{eq:Doob2}) to $e^{-S_{\rm tot}(t)}$.   We then obtain the conditional integral fluctuation relation (\ref{eq:strongCon}) for stopping times $T_2\leq T_1$, viz., 
\begin{eqnarray}
\left\langle e^{-S_{\rm tot}(T_1)}|\mathscr{F}_{T_2}\right\rangle =    e^{-S_{\rm tot}(T_2)}.   \label{stoppingInt2}
\end{eqnarray}
 The relation (\ref{stoppingInt2}) is valid either for finite stopping times $T_1\in [0,\tau]$  or for stopping times $T_1$ and $T_2$ for which $S_{\rm tot}(t)$ is bounded for all $t\in [0,T_1]$.
The integral fluctuation relations at stopping times (\ref{stoppingInt}),  (\ref{stoppingIntxxx}) and  (\ref{stoppingInt2})   imply that certain stochastic properties of entropy production are universal.     This will be discussed in the next section.

\section{Second law of thermodynamics at stopping times}\label{sec:second}
Jensen's inequality $   \langle e^{-S_{\rm tot}(T\wedge \tau)}\rangle \geq  e^{-\langle S_{\rm tot}(T\wedge \tau) \rangle} $ together with the integral fluctuation relation  (\ref{stoppingInt})  imply that
\begin{eqnarray}
\langle S_{\rm tot}(T\wedge \tau)\rangle \geq 0. \label{strong}
\end{eqnarray}  
The relation (\ref{strong}) states that on average entropy production always increases, even when we stop the process at a random time $T$ chosen according to a given protocol.    This law is akin to the relation  (\ref{eq:Sub}) describing that a gambler cannot make profit out of a fair game of chance, even when he/she quits the game in an intelligent manner.    Analogously,  (\ref{stoppingIntxxx})  implies the  law  $\langle S_{\rm tot}(T)\rangle \geq 0$ for unbounded stopping times $T$.    We have thus derived the second law (\ref{eq:strong2})  of thermodynamics at stopping times.  

When applying this second law to examples of physical processes  one obtains  interesting bounds on heat and work in nonequilibrium stationary states.  Below we first discuss  bounds on the average dissipated heat in isothermal processes and then bounds on the average work in stationary stochastic heat engines.

\subsection{Bounds on heat absorption in isothermal processes}
For  systems that are in contact  with one thermal reservoir at temperature $\mathsf{T}_{\rm env}$  and for which the entropy of hidden internal degrees of freedom is negligible, the entropy production is given by (\ref{eq:Stotxx}-\ref{eq:Ssys}), and the    environment entropy takes the form \cite{seifert2012stochastic, seifert2008stochastic}
\begin{eqnarray}
S_{\rm env}(t)  = -\frac{Q(t)}{\mathsf{T}_{\rm env}}, \label{eq:thermod}
\end{eqnarray}   
where $Q$ is the heat transferred from  the environment to the system.
Relation  (\ref{eq:thermod}) relates the stochastic entropy production~(\ref{eq:Stotxx})  to the stochastic heat that enters into the first law of thermodynamics.     Therefore, for isothermal systems the second law (\ref{strong}) reads 
\begin{eqnarray}
\langle Q (T\wedge \tau)\rangle  \leq \mathsf{T}_{\rm env}  \langle  \Delta S_{\rm sys}(T\wedge \tau)\rangle = \mathsf{T}_{\rm env}  \Big\langle \log  \frac{p_{\rm ss}(X(0))}{ p_{\rm ss}(X(T\wedge \tau))} \Big\rangle. \label{eq:second2}
\end{eqnarray}  
Analogously,  we  obtain the relation (\ref{eq:QT}) for unbounded stopping times $T$.     

The relation (\ref{eq:second2}) implies that  it is not possible to extract on average heat from a thermal reservoir 
when the system state is invariant.   Indeed, when $X(T\wedge \tau) = X(0)$, then $\langle Q (T\wedge \tau)\rangle\leq 0$.       However, if the system entropy at the stopping time $T$ is different than the entropy in the stationary state, then it is possible to extract on average at most an amount  $\mathsf{T}_{\rm env}  \Big\langle \log  \frac{p_{\rm ss}(X(0))}{ p_{\rm ss}(X(T\wedge \tau))} \Big\rangle$ of heat from the thermal reservoir.    

For  systems in equilibrium $p_{\rm ss}(x) \sim e^{-v(x)/\mathsf{T}_{\rm env}}$, such that  the bound (\ref{eq:second2})  reads $\langle Q (T\wedge \tau)\rangle \leq  \langle \Delta v(T\wedge \tau)\rangle $.  Moreover, according to the first law of thermodynamics $\langle Q (T\wedge \tau)\rangle =  \langle \Delta v(T\wedge \tau)\rangle $, such that  for systems in equilibrium the  bound (\ref{eq:second2}) is tight.

Notice that the bound on the right hand side of (\ref{eq:second2}) is maximal for stopping times of the form 
\begin{eqnarray}
T^\dagger = {\inf}\left\{ t\geq 0: p_{\rm ss}(X(t)) =  {\rm min}_{x\in\mathcal{X}} p_{\rm ss}(x) \right\} . \label{eq:minT}
\end{eqnarray} 
If $X(t)$ is a recurrent process, i.e.~$\mathbb{P}(T^\dagger<\infty)=1$,  and if $\tau\rightarrow \infty$, then  
\begin{eqnarray}
 \langle Q (T^\dagger)\rangle  \leq    \mathsf{T}_{\rm env}  \sum_{x\in \mathcal{X}}p_{\rm ss}(x)  \log  \frac{p_{\rm ss}(x)}{{\rm min}_{x'\in\mathcal{X}} p_{\rm ss}(x')}  . \label{eq:boundHeat}
\end{eqnarray}

\subsection{Efficiency of heat engines at stopping times:  the case of Feynman's ratchet}
 We consider stochastic  heat engines in contact with two thermal baths at temperatures $\mathsf{T}_{\rm c}$ and $\mathsf{T}_{\rm h}$ with $\mathsf{T}_h \geq \mathsf{T}_c$.   A paradigmatic example is  Feynman's ratchet   \cite{feynman1965feynman, parrondo1996criticism, sekimoto1997kinetic,magnasco1998feynman,singh2017feynman}, which is composed of a ratchet wheel with a pawl that is   mechanically linked   by an axle  to  a vane.   The ratchet wheel and the pawl are immersed in a hot thermal reservoir, and the vane is immersed in a cold thermal reservoir.    An external mass  is connected to the axle of the Feynman ratchet and follows the movement of the ratchet wheel.     If the wheel turns in the clockwise direction then the axle performs work on the mass, whereas if the wheel turns in the counterclockwise direction then the mass performs work on the axle . 
 
 We now perform an analysis of the Feynman ratchet at stopping times.    For example, we  ask the question what are the efficiency and the power of the ratchet when the system is stopped right before or after the "main event", i.e., the passage of the pawl over the peak of the ratchet wheel.    The first law of thermodynamics implies that 
   \begin{eqnarray}
  \langle Q_{\rm h}(T)\rangle + \langle Q_{\rm c}(T)\rangle + \langle W(T)\rangle = \langle \Delta v(T)\rangle, 
  \end{eqnarray}
  where $Q_{\rm h}$ is the heat absorbed by the ratchet from the hot reservoir,  $Q_{\rm c}$ is heat   absorbed by the ratchet from the cold reservoir,   $v$ is the mechanical energy stored in the pawl, and $W$ is the work performed on the external mass.  For the Feynman's ratchet,  the second law of thermodynamics at stopping times $T$ reads 
   \begin{eqnarray}
 \frac{\langle Q_{\rm h}(T)\rangle}{\mathsf{T}_{\rm h}} +   \frac{\langle Q_{\rm c}(T)\rangle}{\mathsf{T}_{\rm c}}  \leq \langle \Delta S_{\rm sys}(T)\rangle,
  \end{eqnarray}
  where $S_{\rm sys}$ is the entropy of the ratchet.    If $\langle Q_{\rm h}(T)\rangle>0$, the first and second law of thermodynamics at stopping times imply the inequality (\ref{eq:carnotStoppxxxx}), i.e.,
     \begin{eqnarray} 
\eta_{T}\    \leq    \eta_{\rm C} -  \frac{\langle \Delta \mathcal{F}_{\rm c}(T)\rangle}{\langle Q_{\rm h}(T)\rangle }, \nonumber
  \end{eqnarray} 
  where we have introduced the efficiency at stopping times
  \begin{eqnarray}
\eta_{T} := - \frac{\langle W(T)\rangle}{\langle Q_{\rm h}(T)\rangle}, \nonumber
  \end{eqnarray}
the Carnot efficiency 
  \begin{eqnarray}
  \eta_{\rm C}= 1-\frac{ \mathsf{T}_{\rm c}}{\mathsf{T}_{\rm h}},
  \label{eq:Carnotefficiency}
  \end{eqnarray}
  and the system free energy
    \begin{eqnarray}
   \langle \Delta \mathcal{F}_{\rm c}(t)\rangle = \langle \Delta v(t)\rangle -\mathsf{T}_{\rm c}\langle \Delta S_{\rm sys}(t)\rangle.
    \end{eqnarray} 
For   $T = t$,  $\langle \Delta F_{\rm c}(T) \rangle =  0 $ and we obtain the classical  Carnot bound $\eta_t\leq \eta_{\rm C}$.       Moreover, if $X(T) = X(0)$, which implies that the process stops when it returns to its original state, then $\langle \Delta F_{\rm c}(T) \rangle =  0 $ and we obtain again the classical  Carnot bound $\eta_t\leq \eta_{\rm C}$.   Hence, it is not possible to exceed on average the Carnot efficiency when the final state equals the initial state and thus when the heat engine is  a  cyclic process in phase space.  

However, for general stopping times $T$,  $\langle \Delta \mathcal{F}_{\rm c}(T)\rangle$  is different than zero.             Interestingly,   for stopping times $T$ for which  $   \langle \Delta \mathcal{F}_c(T)\rangle/\langle Q_{\rm h}(T) \rangle$  is negative,  the second law of thermodynamics at stopping times implies that  $\eta_T$  is bounded by a constant that is larger than the Carnot efficiency.    
      Note that the   stopping-time efficiency $\eta_T$ is defined as the ratio of averages, and not as the average of the ratios. In general $\langle W(T)\rangle/\langle Q_{\rm h}(T)\rangle \neq \langle W(T) / Q_{\rm h}(T) \rangle$, and the latter  corresponds to the average of an unbounded random variable whose value at fixed times $T=t$ has been previously studied in~\cite{sekimoto2000carnot,verley2014unlikely,verley2014universal,polettini2015efficiency,martinez2016brownian}.  

      
    
    Another interesting property of thermodynamic observables at stopping times is that they can  take a different sign with respect to their stationary averages.  For example, it is possible that at the stopping time $T$  the fluxes of the Feynman ratchet have the same sign as those in a refrigerator, namely, 
 $\langle W(T)\rangle > 0$,  $\langle Q_{\rm c}(T)\rangle >0$ and $\langle Q_{\rm h}(T)\rangle < 0$.    To evaluate the performance of this process, we introduce the coefficient   $\epsilon_T:=- \langle Q_{\rm c}(T)\rangle / \langle W(T)\rangle>0$, for which  the second law of thermodynamics at stopping times reads $\epsilon_T  \leq \left[\left(\frac{\mathsf{T}_{\rm h}}{\mathsf{T}_{\rm c}}-1\right)+\frac{\langle \Delta \mathcal{F}_{\rm h}(T)\rangle}{\langle Q_{\rm c}(T)\rangle}\right]^{-1}$,
with $ \langle \Delta \mathcal{F}_{\rm h}(T)\rangle := \langle \Delta v(T)\rangle -\mathsf{T}_{\rm h}\langle \Delta S_{\rm sys}(T)\rangle$.  For    fixed times $T=t$  and for  stopping times $T$ with $X(T)=X(0)$,  we recover the classical bound $\epsilon_T \leq \mathsf{T}_{\rm c}/(\mathsf{T}_{\rm h}-\mathsf{T}_{\rm c})$~\cite{velasco1997new,de2012optimal,rana2016anomalous}.

       In Section~\ref{Sec:illustration}, 
  we  illustrate the bounds (\ref{eq:QT}) and (\ref{eq:carnotStoppxxxx})   on  simple physical models.

\section{Universal properties of stochastic entropy production} \label{Sec:consequences}
We use the fluctuation relations (\ref{stoppingInt}), (\ref{stoppingIntxxx}) and   (\ref{stoppingInt2}) to derive universal relations about the stochastic properties of entropy production in stationary  processes.     

All our results hold  for  nonequilibrium stationary states for which  $\lim_{t\rightarrow \infty}e^{-S_{\rm tot}(t)} = 0$ ($\mathbb{P}$-almost surely).

If  entropy production is bounded for  $t<T$, then  we will use the optional stopping  theorem (\ref{stoppingIntxxx}) for stopping times within  infinite-time windows , and  if entropy production  is unbounded for $t<T$, then we will use the optional stopping theorem (\ref{stoppingInt}) for stopping times within finite-time windows.

   \subsection{Fluctuation properties at fixed time $T=t$} 

We first consider the case where the stopping time $T$ is a fixed non-fluctuating time $t$, i.e.,  $T=t$.  In this case the fluctuation relation (\ref{stoppingInt}) is  the integral fluctuation theorem derived in  \cite{seifert2005entropy}, 
   \begin{eqnarray}
   \langle e^{-S_{\rm tot}(t)}\rangle = 1, \label{stoppingIntx}
   \end{eqnarray}
   and the second law inequality (\ref{strong}) yields the second law of stochastic  thermodynamics~\cite{seifert2012stochastic} 
   \begin{eqnarray}
\langle S_{\rm tot}(t)\rangle \geq 0. \label{weak}
\end{eqnarray}
The relation (\ref{stoppingIntx}) provides a bound on negative  fluctuations of entropy production, and for isothermal systems bounds on  the fluctuations of work~\cite{jarzynski2011equalities}.   Since $e^{-S_{\rm tot}(t)}$ is a positive random variable, we can use Markov's inequality, see e.g.~chapter 1 of  \cite{tao2012topics}, to bound events of large $e^{-S_{\rm tot}(t)}$, namely, 
\begin{eqnarray}
\mathbb{P}\left(e^{-S_{\rm tot}(t)}\geq \lambda \right) \leq \frac{ \langle e^{-S_{\rm tot}(t)}\rangle}{\lambda },\quad \lambda \geq 0. \label{eq:markovxxx}
\end{eqnarray} 
Using the integral fluctuation relation (\ref{stoppingIntx}) together with  (\ref{eq:markovxxx}) we obtain
\begin{eqnarray}
\mathbb{P}\left(S_{\rm tot}(t)\leq  -s \right) \leq  e^{-s}, \quad s\geq 0 ,\label{eq:markov}
\end{eqnarray} 
which is a well-known bound on the probability of negative entropy production, see Equation (54) in \cite{seifert2012stochastic}.

The relations (\ref{stoppingInt}) and (\ref{strong}) are more general than the relations (\ref{stoppingIntx}) and (\ref{weak}), since the former concern an average over an ensemble of trajectories of variable length $x^T_0$ whereas the latter concern  an average over an ensemble of trajectories of fixed length $x^t_0$.  Therefore, we  expect that evaluating (\ref{stoppingInt}) at fluctuating stopping times $T$ it is possible to derive stronger constraints  on the probability of negative entropy production than (\ref{eq:markov}).   This is the program we will pursue in the following sections.

\subsection{Splitting probabilities of entropy production} \label{sec:int}
We consider the first-passage time  
\begin{eqnarray}
T_{\rm fp} = {\rm inf}\left\{t\geq 0: S_{\rm tot}(t) \notin (-s_-,s_+)\right\}
\end{eqnarray}
for entropy production with two absorbing thresholds at $-s_-<0$ and $s_+>0$.   If the set $\left\{t: S_{\rm tot}(t) \notin (-s_-,s_+)\right\}$ is empty, then $T_{\rm fp} =  \infty$.   Since entropy production is bounded for all values $t<T_{\rm fp}$ we can use the optional stopping theorem (\ref{stoppingIntxxx}).  
  
We split the ensemble of trajectories $\Omega$ into two sets $\Omega_- = \left\{S_{\rm tot}(T_{\rm fp})
\leq -s_-\right\}$ and $\Omega_+ = \left\{S_{\rm tot}(T_{\rm fp})
\geq s_+\right\}$.  Since    $S_{\rm tot}(T_{\rm fp})\in (\infty,-s_-] \cup [s_+,\infty)$ and $\lim_{t\rightarrow \infty}e^{-S_{\rm tot}(t)} = 0$ ($\mathbb{P}$-almost surely),   the splitting probabilities  $p_- = \mathbb{P}\left[\Omega_- \right]$ and $p_+ = \mathbb{P}\left[\Omega_+ \right]$ have a total probability  of one,  
 \begin{eqnarray}
 p_+ + p_- = 1 .\label{eq:norm}
 \end{eqnarray} 
We  apply the integral fluctuation relation  (\ref{stoppingIntxxx}) to the stopping time $T_{\rm fp}$:
\begin{eqnarray}
  1 = p_- \langle e^{-S_{\rm tot}(T_{\rm fp})}\rangle_{-}+ p_+\langle e^{-S_{\rm tot}(T_{\rm fp})}\rangle_{+} \label{eq:fluct}
\end{eqnarray} 
where  
\begin{eqnarray}
\langle e^{-S_{\rm tot}(T_{\rm fp})} \rangle_+ &=&  \frac{\int_{\omega\in\Omega_+} d\mathbb{P} \:e^{-S_{\rm tot}(T_{\rm fp}(\omega), \omega)}}{\int_{\omega\in\Omega_+} d\mathbb{P} }, \\ 
\langle e^{-S_{\rm tot}(T_{\rm fp})} \rangle_- &=&  \frac{\int_{\omega\in\Omega_-} d\mathbb{P} \:e^{-S_{\rm tot}(T_{\rm fp}(\omega), \omega)}}{\int_{\omega\in\Omega_-} d\mathbb{P}}.
\end{eqnarray} 
The relations (\ref{eq:norm}) and (\ref{eq:fluct}) imply  that
\begin{eqnarray}
p_+ &=& \frac{\langle e^{-S_{\rm tot}(T)}\rangle_- -1}{\langle e^{-S_{\rm tot}(T)} \rangle_- - \langle e^{-S_{\rm tot}(T)} \rangle_+}, \label{relsplit1}\\ 
p_- &=& \frac{1 - \langle e^{-S_{\rm tot}(T)}\rangle_+ }{\langle e^{-S_{\rm tot}(T)} \rangle_- - \langle e^{-S_{\rm tot}(T)} \rangle_+}.  \label{relsplit2}
\end{eqnarray}
Moreover, since $S_{\rm tot}(T_{\rm fp})\in (\infty,-s_-] \cup [s_+,\infty)$ we have that  $\langle e^{-S_{\rm tot}(T)}\rangle_- \geq e^{s_-}$ and $\langle e^{-S_{\rm tot}(T)}\rangle_+ \leq e^{-s_+}$ .   Using these two inequalities in (\ref{relsplit1}) and (\ref{relsplit2}) we obtain the  universal inequalities  (\ref{eq:split2}) for the splitting probabilities, viz.,  
 \begin{eqnarray}
p_+ \geq  1 - \frac{1}{e^{s_-}-e^{-s_+}} , \quad   p_- \leq    \frac{1}{e^{s_-}-e^{-s_+}}, \nonumber
\end{eqnarray}
which hold for first-passage times with $s_-,s_+>0$.
 
In the case where $\vec{X}(t)$ is a continuous stochastic process $S_{\rm tot}(T_{\rm fp})\in  \left\{-s_-,s_+\right\}$ holds with probability one.
 Using in (\ref{relsplit1}) and (\ref{relsplit2}) that  $\langle e^{-S_{\rm tot}(T)}\rangle_- = e^{s_-}$ and $\langle e^{-S_{\rm tot}(T)}\rangle_+ = e^{-s_+}$,  we obtain
   \begin{eqnarray}
p_+ =   \frac{e^{s_-}-1}{e^{s_-}-e^{-s_+}}, \quad   p_- =    \frac{1-e^{-s_+}}{e^{s_-}-e^{-s_+}},  \nonumber
\end{eqnarray}
which are the relations (\ref{eq:split1}). 
Hence, the splitting probabilities of entropy production are universal for continuous processes.  

The bounds  (\ref{eq:split2}) apply not only to first-passage times but  hold more generally for stopping times $T$ of $\vec{X}(t)$ for which  $S_{\rm tot}(T)\in (\infty,-s_-] \cup [s_+,\infty)$ holds with probability one and for which $S_{\rm tot}(t)$ is bounded for all $t\in [0, T]$.     Analogously, the equalities (\ref{eq:split1}) apply not only to first-passage times but hold more generally for stopping times $T$ of $\vec{X}(t)$ for which  $S_{\rm tot}(T)\in \left\{-s_-,s_+\right\}$ holds with probability one and for which  $S_{\rm tot}(t)$ is bounded for all $t\in [0, T]$. 

\subsection{Infima of entropy production  } \label{inf:Sub} 
  We can use the  results of the previous subsection to derive universal bounds and equalities for the statistics of infima of entropy production.      The  global infimum of entropy production is defined by 
  \begin{eqnarray}
  S_{\rm inf} = {\rm inf}_{t\geq 0}S_{\rm tot}(t)
  \end{eqnarray}
  and denotes the most negative value of entropy production.   

Consider  the first-passage time $T_{\rm fp}$, denoting the first time when entropy production either goes below $-s_-$ or goes above $s_+$, with $s_-, s_+\geq0$ and its associated splitting probability $p_-$  The cumulative distribution of $S_{\rm inf}$ is given by 
\begin{eqnarray}
\mathbb{P}\left[S_{\rm inf} \leq -s_-\right] = \lim_{s_+\rightarrow \infty}p_- .\label{eq:relationCump-}
\end{eqnarray} 
Using the inequalities (\ref{eq:split2}) we obtain the bound (\ref{eq:cumInf}) for the cumulative distribution of infima of entropy production, i..e., 
\begin{eqnarray}
\mathbb{P}\left[S_{\rm inf} \leq -s_-\right] \leq e^{-s_-}. \nonumber
\end{eqnarray}
 The inequality (\ref{eq:cumInf})   bears a strong similarity with the inequality (\ref{eq:markov}).  However, since  by definition 
 $S_{\rm inf} \leq S_{\rm tot}(t)$ for any value of $t$, we also have that $\mathbb{P}\left[S_{\rm tot} \leq -s_-\right]    \leq \mathbb{P}\left[S_{\rm inf} \leq -s_-\right]$ and therefore the inequality   (\ref{eq:cumInf}) is stronger.

For continuous processes $\vec{X}$, the inequality (\ref{eq:cumInf}) becomes an equality.  Indeed, if the stochastic process $\vec{X}$ is continuous, then with probability one   $S_{\rm tot}(T_{\rm fp})\in \left\{-s_-, s_+\right\}$, and therefore $p_-$ is given by (\ref{eq:split1}).   Using  the relation (\ref{eq:relationCump-}),  we obtain  
\begin{eqnarray}
\mathbb{P}\left[S_{\rm inf} \leq -s_-\right] = e^{-s_-}. \label{eq:Sinfxx}
\end{eqnarray} 
As a consequence, for continuous stochastic processes the global infimum  $S_{\rm inf}$ of entropy production is characterised by an exponential  probability density (\ref{eq:globInf}) with mean value mean value $\langle S_{\rm inf} \rangle = -1$.

\subsection{Survival probability of entropy production}  
We analyse the survival probability 
\begin{eqnarray}
p_{\rm sur}(t) = \mathbb{P}\left[ \tilde{T} > t\right], 
\end{eqnarray}
of the first-passage time with one absorbing boundary,
\begin{eqnarray}
\tilde{T} = {\rm inf}\left\{t: S_{\rm tot}(t) = s_0 \right\},
\end{eqnarray} 
 for  continuous stochastic processes $\vec{X}(t)$.    If the set $\left\{t: S_{\rm tot}(t) = s_0 \right\}$ is empty, then  $\tilde{T} =  \infty$.

We  use the fluctuation relation (\ref{stoppingInt}) since  for first-passage times with one absorbing boundary $S_{\rm tot}(t)$ is  unbounded for  $t\in [0,T]$. 
Applying (\ref{stoppingInt}) to $\tilde{T}$ we obtain 
\begin{eqnarray} 
 1 = p_{\rm sur}(t)\langle e^{-S_{\rm tot}(t)} \rangle_{\rm sur}    + (1-p_{\rm sur}(t)) e^{-s_0}
\end{eqnarray}
and thus also relation (\ref{eq:survxxxxxx}), i.e., 
\begin{eqnarray}
p_{\rm sur}(t) = \frac{e^{-s_0}  - 1}{e^{-s_0} -   \langle e^{-S_{\rm tot}(t)} \rangle_{\rm sur}   }, 
\end{eqnarray}   
where 
\begin{eqnarray}
\langle e^{-S_{\rm tot}(t)} \rangle_{\rm sur}  = \langle e^{-S_{\rm tot}(t)} I_{\tilde{T}>t} \rangle/\langle  I_{\tilde{T}>t}\rangle.
\end{eqnarray}
For positive values of $s_0$ we expect that  $\lim_{t\rightarrow \infty}p_{\rm sur}(t) = 0$.   This implies that $\lim_{t\rightarrow \infty}  \langle e^{-S_{\rm tot}(t)} \rangle_{\rm sur} = \infty$.    For negative values of $s_0$ we expect that  $\lim_{t\rightarrow \infty}p_{\rm sur}(t) =  1 - e^{s_0} = \mathbb{P}\left[S_{\rm inf} \geq s_0\right]$ which holds if 
 $\lim_{t\rightarrow \infty}\Big\langle e^{-S_{\rm tot}(t)} I_{\tilde{T}>t} /\langle I_{\tilde{T}>t} \rangle \Big\rangle = 0$.

\subsection{Number of crossings}\label{sec:crossings}
We consider the number of times $N_\times$  entropy production crosses an interval $[-\Delta, \Delta]$ from the negative side to the positive side, i.e., in the direction $-\Delta \rightarrow \Delta$.   We can bound the distribution of $N_{\times}$ using a sequence of stopping times.

The probability that $N_\times>0$  is equal to the probability that the infimum is smaller or equal than $-\Delta$, and therefore using (\ref{eq:cumInf}) we obtain
\begin{equation}
\mathbb{P}(N_\times>0;\Delta) = \mathbb{P}\left[S_{\rm inf} \leq -s_-\right] \le e^{-\Delta}.
\end{equation}
Applying the conditional fluctuation relation (\ref{stoppingInt2})  on two sequences of stopping times, we derive  in the \ref{sec:crossinC}  the inequality
\begin{equation}
\mathbb{P}\left[N_{\times}\geq n+1| N_{\times}\geq n\right] \le e^{-2\Delta}\qquad \mathrm{with}\quad n>0,  \label{eq:formulaToBe}
\end{equation}
and therefore  
\begin{equation}
\mathbb{P}(N_\times>n)\le e^{-\Delta (2n+1)} ,
\end{equation}
which is the inequality (\ref{ineq:crossing}).
The probability of observing a large number of crossing decays  thus at
least exponentially in $N_\times$.
For continuous processes $\vec{X}(t)$ the probability mass function $\mathbb{P}(N_\times)$ is a universal statistic given by
\begin{equation}
\mathbb{P}(N_\times=n)= \left\{
\begin{array}{lr}
1-e^{-\Delta} & n= 0, \\
2\sinh ( \Delta) e^{-2 n\Delta} & n\ge 1,
\end{array}\right.
\end{equation}
which is the same relation as derived in \cite{pigolotti2017generic} for overdamped Langevin processes.

    \section{The influence of finite statistics on the integral fluctuation relation at stopping times }\label{sec:finiteStat}
    In empirical situations we may want  to use the integral fluctuation relation at stopping times to test whether a given process is the stochastic entropy production~\cite{speck2007distribution, koski2013distribution, ciliberto2013heat}.  In these cases we have to deal with finite statistics.  It is useful to know how many experiments are required to verify the integral fluctuation relation at stopping times with  a certain accuracy.   Therefore, in this section we discuss the influence of finite statistics on tests of the integral fluctuation relation at stopping times.  
      
  We consider the  case where 
 $T$ is  a two-boundary first-passage time for the entropy production of a continuous stationary process, i.e.,  $T = T_{\rm fp} = {\rm inf}\left\{t\geq 0: S_{\rm tot}(t) \notin (-s_-,s_+)\right\}$.  We imagine to estimate in 
an experiment the average  $\langle e^{-S_{\rm tot}(T_{\rm fp})} \rangle$ using an empirical average   over  $m_{\rm s}$ realisations, 
  \begin{eqnarray}
 A= \frac{1}{m_{\rm s}}\sum^{m_{\rm s}}_{j=1}e^{-S_{\rm tot}(T_j)} =  e^{-s_+} + \frac{N_-}{m_{\rm s}} \left(e^{s_-}-e^{-s_+}\right),  \label{eq:A}
 \end{eqnarray} 
 where the $T_j$ are the different outcomes of the first passage time $T_{\rm fp}$ and where $N_-$ is the number of trajectories that have terminated at the negative boundary, i.e., for which
 $S_{\rm tot}(T_j) = -s_-$.   The expected value of the sample mean is thus 
  \begin{eqnarray}
 \langle A \rangle = 1 ,
    \end{eqnarray}
and the variance of the sample mean is
  \begin{eqnarray}
\sigma^2_A =  \langle A^2\rangle - \langle A \rangle^2 =  \frac{(1-e^{-s_+})( e^{s_-}-1)}{m_{\rm s}}. \label{eq:sigmaA}
   \end{eqnarray}
Hence, for small enough values of  $s_-$ a few samples ${m_{\rm s}}$ will be  enough to test the stopping-time fluctuation relation.    For large enough $s_-$, we obtain   $\langle A^2\rangle - \langle A \rangle^2  \sim e^{s_-}/m_{\rm s}$.      The number of required  samples ${m_{\rm s}}$  scales exponentially in the  value of the negative threshold  $s_-$.   
The full distribution of the empirical estimate $A$  of $\langle e^{-S_{\rm tot}(T_{\rm fp})}\rangle$ is given by
 \begin{eqnarray}
\fl  p_A(a) = \frac{1}{(e^{s_-}-e^{-s_+})^{m_{\rm s}}}\sum^{m_{\rm s}}_{n=0}\delta_{a,n \:e^{s_-} + (m_{\rm s}-n) e^{-s_+}} \left(\begin{array}{c}m_{\rm s} \\ n\end{array}\right) (1-e^{-s_+})^{n} ( e^{s_-}-1)^{m_{\rm s}-n}, \label{eq:pa}
  \end{eqnarray}
  where we have again used $\delta$ for the Kronecker delta.   
  
Since we know the  full distribution of the empirical average $A$, the integral fluctuation relation $\langle e^{-S_{\rm tot}(T_{\rm fp})}\rangle=1$  can be tested in experiments: given a certain observed value of $A\neq 1$ we can use the distribution (\ref{eq:pa}) to compute its $p$-value, i.e., the probability   to observe a deviation from $1$  larger or equal than the empirically observed  $|A-1|$. 
  
 \section{Examples} \label{Sec:illustration} 
 We illustrate the bounds~(\ref{eq:QT}) and  (\ref{eq:carnotStoppxxxx}) on two simple examples of systems described by Langevin equations.   We demonstrate a randomly stopped process can extract work from a stationary isothermal process and we show that heat engines can surpass the Carnot efficiency at stopping times.    Moreover, the mean of the extracted heat and the    performed work are bounded by the second law of thermodynamics at stopping times.
 
 \subsection{Heat extraction in an isothermal system} 
 \label{sec:langevin} 
 We illustrate the second law~(\ref{eq:QT}), for the average heat at stopping times, on the case of  
  a colloidal  particle on a ring  that is driven by an external force $f$ and moves in a potential  $v(y)$, where $y \in [0,2\pi \ell]$ denotes the position on the ring and where $\ell$ is the radius of the ring.   We ask  how much heat $\langle Q(T^\ast)\rangle$ a colloidal particle can extract on average from its environment at the  time $T^\ast$  when the particle reaches for the first time  the highest peak of the  landscape.

 \begin{figure}[t]
\centering
 \hspace{-0.5cm}
 \subfigure[Plot of the potential~(\ref{eq:energy}). ]
{\includegraphics[width=0.5\textwidth]{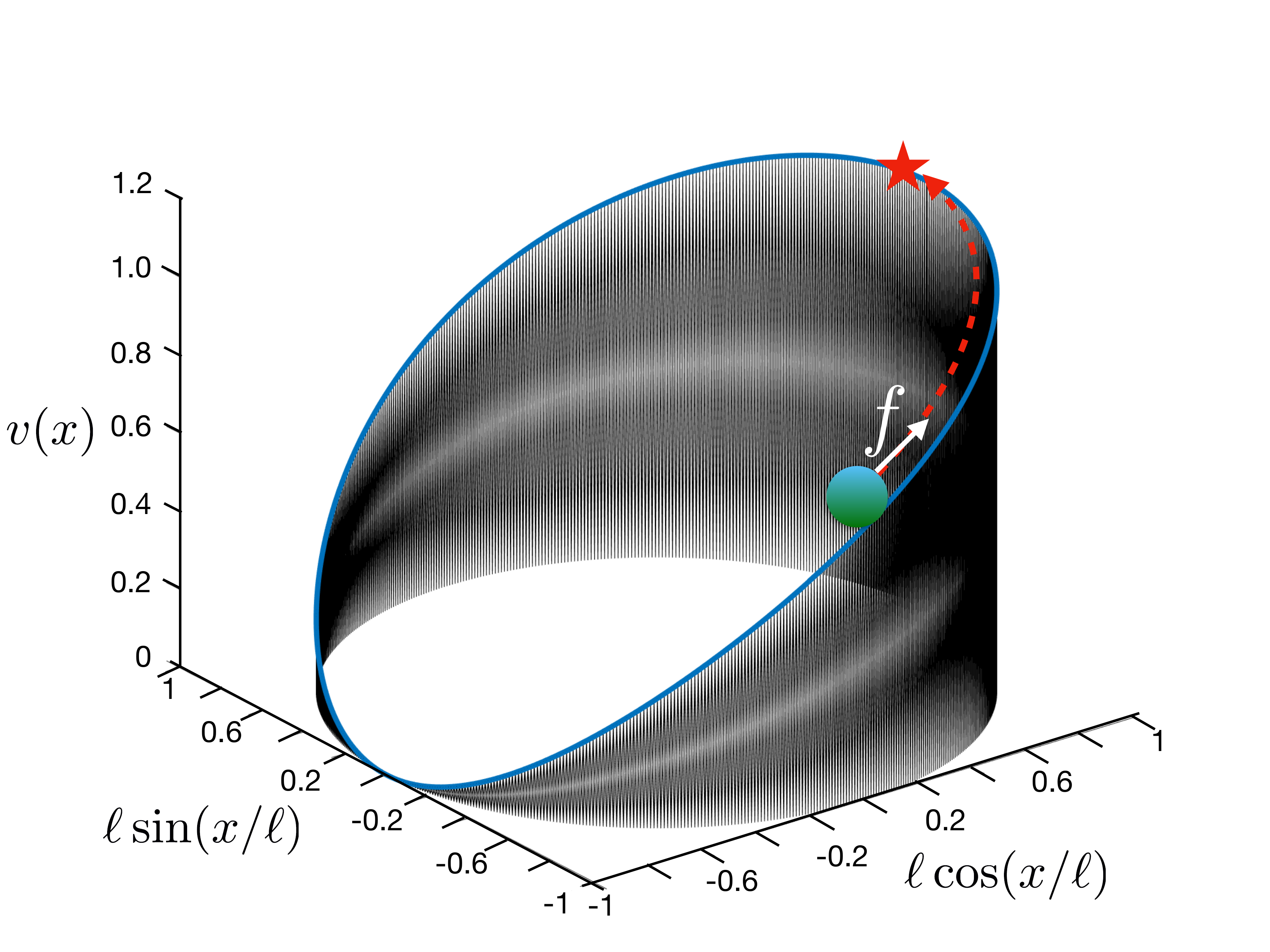}\label{fig1a}} 
\subfigure[Ilustration of the bound  (\ref{eq:QT})  for the absorbed heat at stopping times.]
{\includegraphics[width=0.5\textwidth]{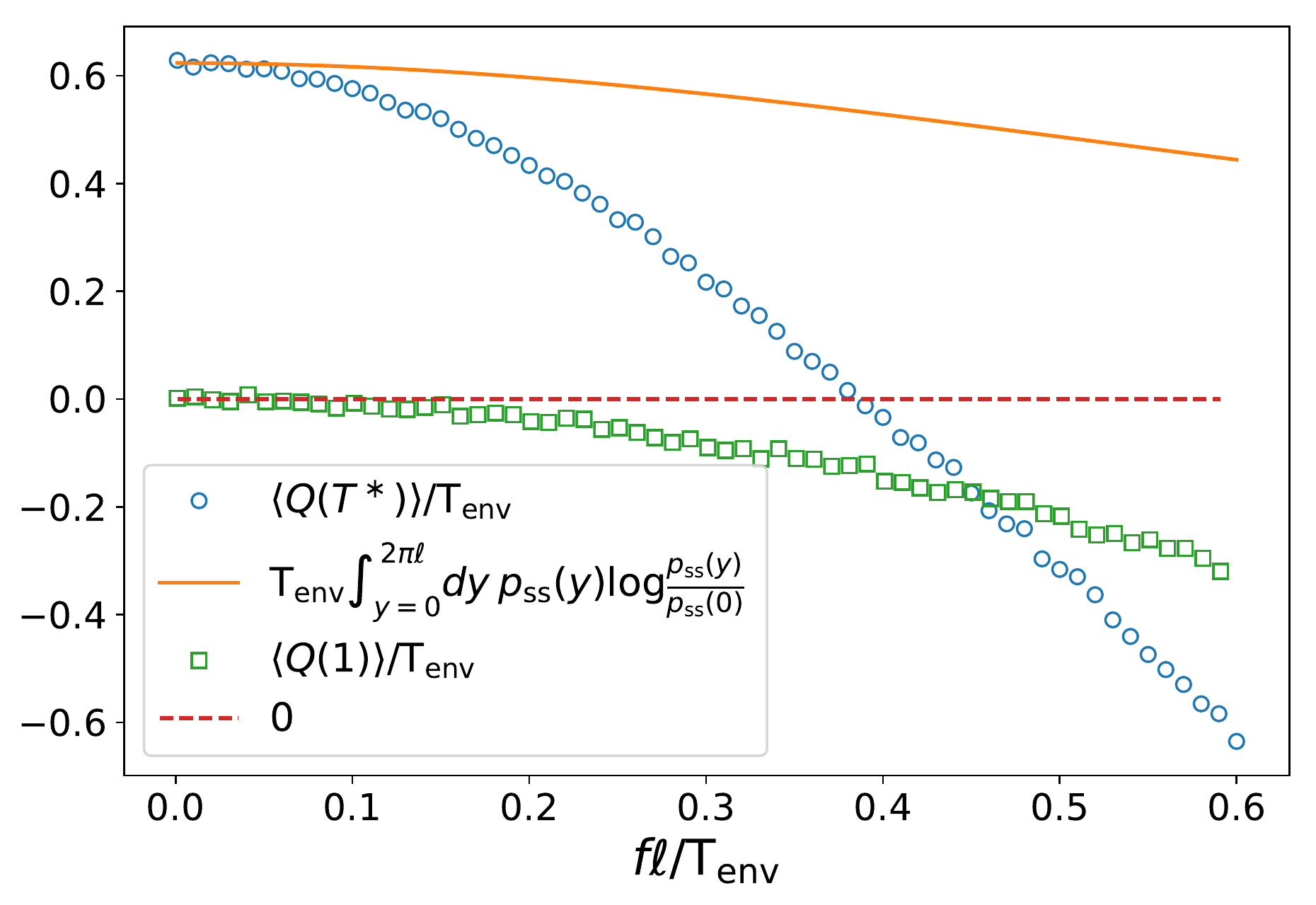}\label{fig1b}} 
\caption{Heat extraction  of a colloidal particle in a nonequilibrium stationary state.
Panel (a):  Illustration of a colloidal particle that is driven by a nonconservative $f$ until the time $T^\ast$ when it reaches the highest point of the 'hill' denoted by the star, at which the process is stopped.   The plotted energy function $v(x)$ is given by (\ref{eq:energy})  with  parameters  $\mathsf{T}_{\rm env} = 1$ and $\ell = 1$. Panel (b): Illustration of the bound~(\ref{eq:QT})  for the absorbed heat $\langle Q(T^\ast)\rangle$ at stopping times $T^\ast$   in the   model~(\ref{eq:langevin1})  with the energy function   plotted in panel (a).    The parameters used are $\mu = 1$, $\mathsf{T}_{\rm env} = 1$, $\ell = 1$.    Markers denote empirical averages  that estimate  $\langle Q(T^\ast)\rangle$ (blue circles) and $\langle Q(1)\rangle$ (green squares) using  $10000$ simulated trajectories. The solid orange line denotes the bound  on the right-hand side of~(\ref{eq:QBoundxx}) which follows from the second law of thermodynamics~(\ref{eq:strong2}) at stopping times, and the red dashed line is simply equal to zero.    } \label{fig1}
\end{figure}   

   We assume that the dynamics of the colloidal particle  is governed by an overdamped Langevin equation of the form 
  \begin{eqnarray}
  \frac{{\rm d}X}{{\rm d} t} = -\mu \frac{\partial v(X(t))}{\partial x}  + \mu f +  \sqrt{2d} \zeta(t) \label{eq:langevin1}
    \end{eqnarray}   
    where $X(t)\in\mathbb{R}$ ,  $\mu$ is the mobility coefficient, $d$ is the diffusion coefficient, and $\zeta(t)$ is a Gaussian white noise with $\langle \zeta(t)\zeta(t')\rangle = \delta(t'-t)$.       We assume that the environment surrounding the particle is in equilibrium at a temperature  $\mathsf{T}_{\rm env}$, so that  Einstein relation $d = \mathsf{T}_{\rm env} \mu$ holds.   The potential $v(x)$ is a periodic function with period $2\pi\ell$, i.e., $v(x) = v(x+2\pi \ell)$.  The actual position of the particle is  given by  the variable $Y(t) = X(t) - N(t)2\pi\ell \in [0,2\pi \ell)$, where $N(t)\in \mathbb{Z}$
 is the winding number, i.e., the net number of times the particle has traversed the ring.   
The heat $Q$ can be expressed as~\cite{sekimoto1998langevin, sekimoto2010stochastic} 
\begin{eqnarray}
Q(t) = v(X(t))-v(X(0)) + f \int^t_0 {\rm d}X(t'),
\end{eqnarray}
and the stationary distribution is  \cite{reimann2002brownian} 
\begin{eqnarray}
p_{\rm ss}(y) \sim e^{-\frac{v(y)-fy}{\mathsf{T}_{\rm env}}} \int^{y+2\pi\ell}_{y}dy' \:e^{\frac{v(y')-fy'}{\mathsf{T}_{\rm env}}} . \label{eq:pss}
\end{eqnarray}

We simulate the model (\ref{eq:langevin1}) for a  periodic potential  of the form \cite{PhysRevE.74.061113},
      \begin{eqnarray}
    v(x) = \mathsf{T}_{\rm env} \ln \left(\cos (x/\ell) + 2 \right), \label{eq:energy}
    \end{eqnarray}    
     which is illustrated in figure~\ref{fig1a} for $\ell=1$ and $ \mathsf{T}_{\rm env}=1$.
      In this case,  the time $T^\ast$  when the particle reaches for the first time  the highest peak of the  landscape  is
      \begin{eqnarray}
T^\ast = {\inf}\left\{ t\geq 0:  X(t) = 0 \right\},\label{eq:minTx}
\end{eqnarray}    
and the stationary distribution of $X(t)$ is given by 
\cite{PhysRevE.74.061113}
\begin{eqnarray}
p_{\rm ss}(x) =\frac{3 \left( g^2\left(2+\cos(x/\ell)\right)  -   g\sin \left(x/\ell\right)+2\right)}{2\pi \ell \left(3 g^2+2  \sqrt{3}\right) \left(\cos \left(x/\ell\right)+2\right)}.
\end{eqnarray}
   Hence, the bound~(\ref{eq:langevin1}) reads
   \begin{eqnarray}
 \fl   \langle Q(T^\ast)\rangle \geq \mathsf{T}_{\rm env} \int {\rm d} x\: p_{\rm ss}(x) \:\log p_{\rm ss}(x)  - \mathsf{T}_{\rm env} \log  \frac{3\: g^2  +2}{2\pi \ell \left(3 g^2+2  \sqrt{3}\right) }. \label{eq:QBoundxx}
   \end{eqnarray}
    
    In figure~\ref{fig1b} we illustrate the bound  (\ref{eq:QBoundxx}) for  $\langle Q(T^\ast)\rangle$ as a function of the nonequilibrium driving $f$.  We find that heat absorption at stopping times is significant at small values of $f$ and in the linear-response limit of small $f$ the bound (\ref{eq:QBoundxx}) is tight; the tightness of the bound  (\ref{eq:langevin1}) holds in general for  recurrent Markov processes in the linear response limit.       For comparison we also plot the mean heat dissipation  $\langle Q(t)\rangle$  at a fixed time $t=1$.   While  $\langle Q(T^\ast)\rangle$ is positive at small values of $f$, the dissipation $\langle Q(t)\rangle$ at fixed times is always negative.      Note that at intermediate times  $\langle Q(1)\rangle >  \langle Q(T^\ast) \rangle$, but if  we  increase $f$ furthermore then eventually $\langle Q(1)\rangle <  \langle Q(T^\ast) \rangle$ (not shown in the figure).

 \subsection{Illustration of an empirical test of the integral fluctuation relation at stopping times} 

We  use the  the model described in   Subsection~\ref{sec:langevin} to illustrate how the  integral fluctuation relation at stopping times~(\ref{eq:strong}) with  $T = T_{\rm fp} = {\rm inf}\left\{t\geq 0: S_{\rm tot}(t) \notin (-s_-,s_+)\right\}$
 can be tested in an experimental setup.        To this aim, we will verify whether the quantity $A$  defined in ~(\ref{eq:A}), which is the empirical average of $e^{-S_{\rm tot}(T)}$, converges for $m_s\rightarrow \infty$ to one.  
 We use the results  on the statistics of the empirical average of  $A$  described in Section~\ref{sec:finiteStat} to validate  the statistical significance of the experimental results.  
 
In Figure~\ref{fig3} we plot the empirical average~(\ref{eq:A}) as a function of the number of samples~$m_s$ for  ten  simulation runs.        We also plot the theoretical curves $1+\sigma_A$ and $1-\sigma_A$, with $\sigma_A$ the standard deviation of $A$, see (\ref{eq:sigmaA}).     We observe in Figure~\ref{fig3}  that all test runs lie within the $1\pm\sigma_A$  confidence intervals, and we can thus  conclude that numerical experiments are in agreement with  the integral fluctuation relation at stopping times and thus  also with the  fact that   $e^{-S_{\rm tot}(t)}$ is martingale.

   \begin{figure}[t]
\centering
 \hspace{-0.5cm}
{\includegraphics[width=0.5\textwidth]{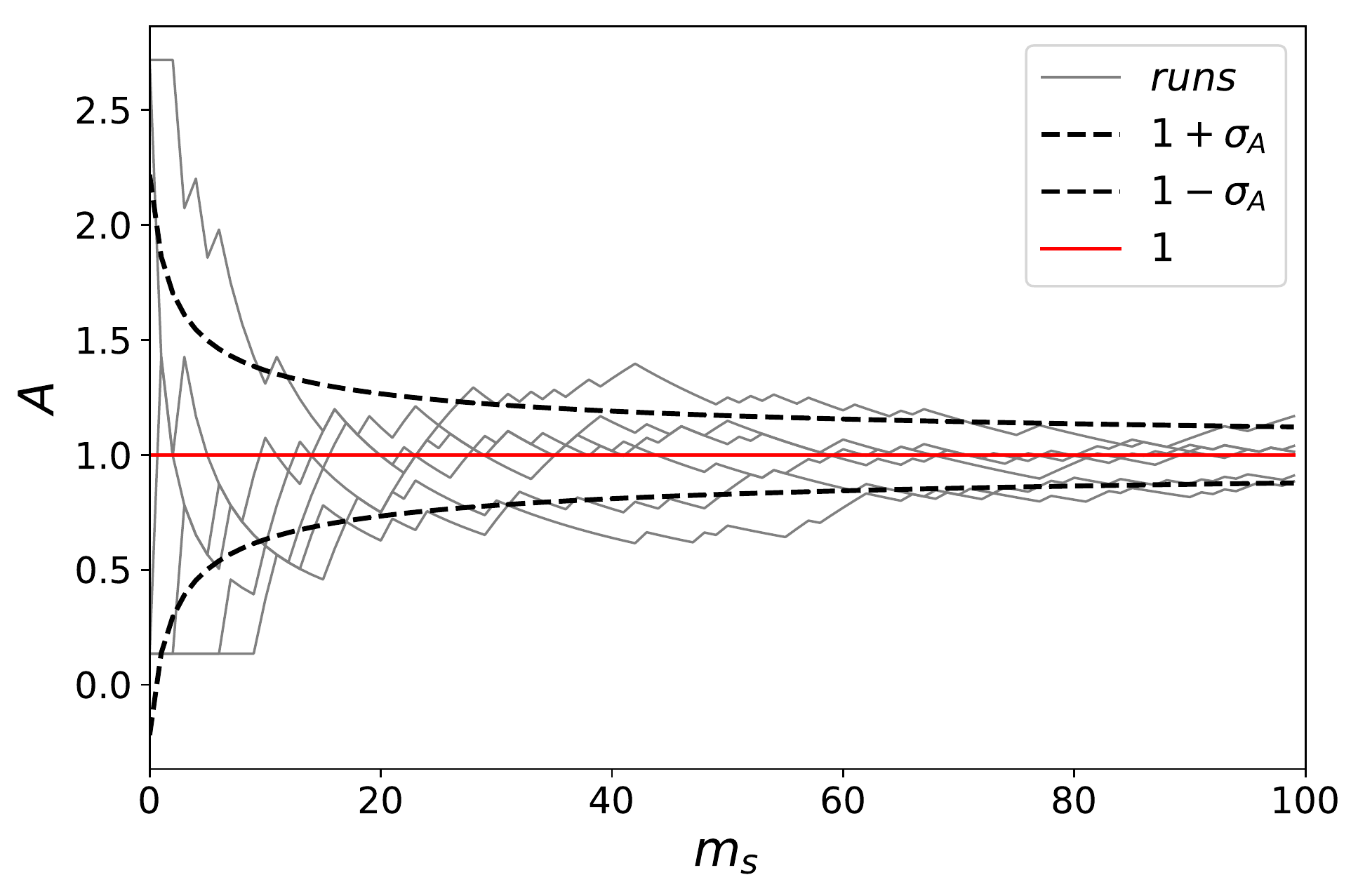}} 
\caption{ Empirical test of the  integral fluctuation relation at stopping times  for  $T = T_{\rm fp}  = {\rm inf}\left\{t\geq 0: S_{\rm tot}(t) \notin (-s_-,s_+)\right\}$.  We plot the empirical average $A$, see (\ref{eq:A}),  for ten simulation runs as a function of the number of samples $m_s$.   The simulation runs are  for the model defined in Section~\ref{sec:langevin}  with parameters  $\mu = \mathsf{T}_{\rm env} = \ell = 1$  and $f = 0.1$, as in Figure~\ref{fig1}.   The threshold parameters that define the stopping time are 
  $s_+ = 2$ and  $s_- = 1$.  Dashed lines denote the $1\pm\sigma_A$  confidence intervals using formula~(\ref{eq:sigmaA}).   The red line denotes $A=1$.   \label{fig3}}
\end{figure}  
 
 \subsection{Super Carnot efficiency  for heat engines at stopping times} 
  \label{sec:engines}
  We illustrate the bound~(\ref{eq:carnotStoppxxxx}) for the efficiency  of stationary stochastic heat engines at stopping times with  
  a Brownian gyrator~\cite{filliger2007brownian}, which is arguably one of the simplest models of a Feynman ratchet. This system  is described by two degrees of freedom $x_1,x_2$  that are driven by an external force field $f(x_1,x_2)$ and interact via a potential  $v(x_1,x_2)$.   Two thermal reservoirs at temperatures $\mathsf{T}_{\rm h}$ and $\mathsf{T}_{\rm c}$, with $\mathsf{T}_{\rm h}>\mathsf{T}_{\rm c}$, interact independently with the coordinates $x_1$ and $x_2$ of the system, respectively.  We are specifically interested in the efficiency $\eta_{T}=-\langle W(T) \rangle /\langle Q_{\rm h}(T) \rangle$, which is   the ratio between the work $-W(T)$  the gyrator performs on its suroundings  in a time interval $[0,T]$, and the heat $Q_{\rm h}(T)$ absorbed by the gyrator  in the same time interval,   with $T$ the  stopping time at which a specific  criterion is first satisfied.   The efficiency $\eta_{T}$ is a measure of the average amount of work the gyrator performs on  its environment.
  

 We consider a Brownian gyrator described by the two coupled stochastic differential equations~\cite{filliger2007brownian,argun2017experimental,pietzonka2018universal,cerasoli2018asymmetry,manikandan2019efficiency}
   \begin{eqnarray}
  \frac{{\rm d} X_1}{{\rm d} t} &=& -\mu \frac{\partial v(X_1(t),X_2(t))}{\partial x_1}  + \mu  \: f_1 (X_1(t),X_2(t))+  \sqrt{2d_1} \zeta_1(t), \label{eq:gyrator1} \\
  \frac{ {\rm d} X_2}{ {\rm d} t} &=& -\mu \frac{\partial v(X_1(t),X_2(t))}{\partial x_2}  + \mu \: f_2 (X_1(t),X_2(t))+  \sqrt{2d_2} \zeta_2(t) \label{eq:gyrator2}.
    \end{eqnarray}  
    Here $\mu$ is the mobility coefficient, $d_1=\mu\mathsf{T}_{\rm h}$  and $d_2=\mu\mathsf{T}_{\rm c}$ are the diffusion coefficients of the two degrees of freedom, $v$ is a potential,  and $f_1$ and $f_2$ are two external nonconservative forces, whose functional form we specify below.   We use the model from Ref.~\cite{pietzonka2018universal}, for which the potential is 
    \begin{equation}
    v(x_1,x_2)=\frac{1}{2}\left( u_1 x_1^2 + u_2 x_2^2 + c x_1 x_2  \right),
    \label{eq:potgyrator}
    \end{equation}
    with $u_1,u_2,c>0$ and  $c <\sqrt{u_1 u_2}$ and for which the two components of the external nonconservative force are
    \begin{equation}
    f_1 (x_1,x_2) = k x_2,\quad f_2 (x_1,x_2) = -kx_1. \label{eq:forcesgyrator}
    \end{equation}
The two thermal reservoirs induce two  stochastic forces  with amplitudes $d_1$ and $d_2$, which appear in Eqs.(\ref{eq:gyrator1}-\ref{eq:gyrator2}) as   two independent Gaussian white noises   $\zeta_1$  and $\zeta_2$  with zero mean and autocorrelation ($i,j=1,2$)
\begin{equation}
\langle \zeta_1 (t)\rangle = \langle \zeta_2 (t)\rangle=0,\quad \langle \zeta_i (t)\zeta_j (t')\rangle = \delta_{i,j} \delta (t-t').
\end{equation}

 \begin{figure}[t]
\centering
\includegraphics[width=0.5\textwidth]{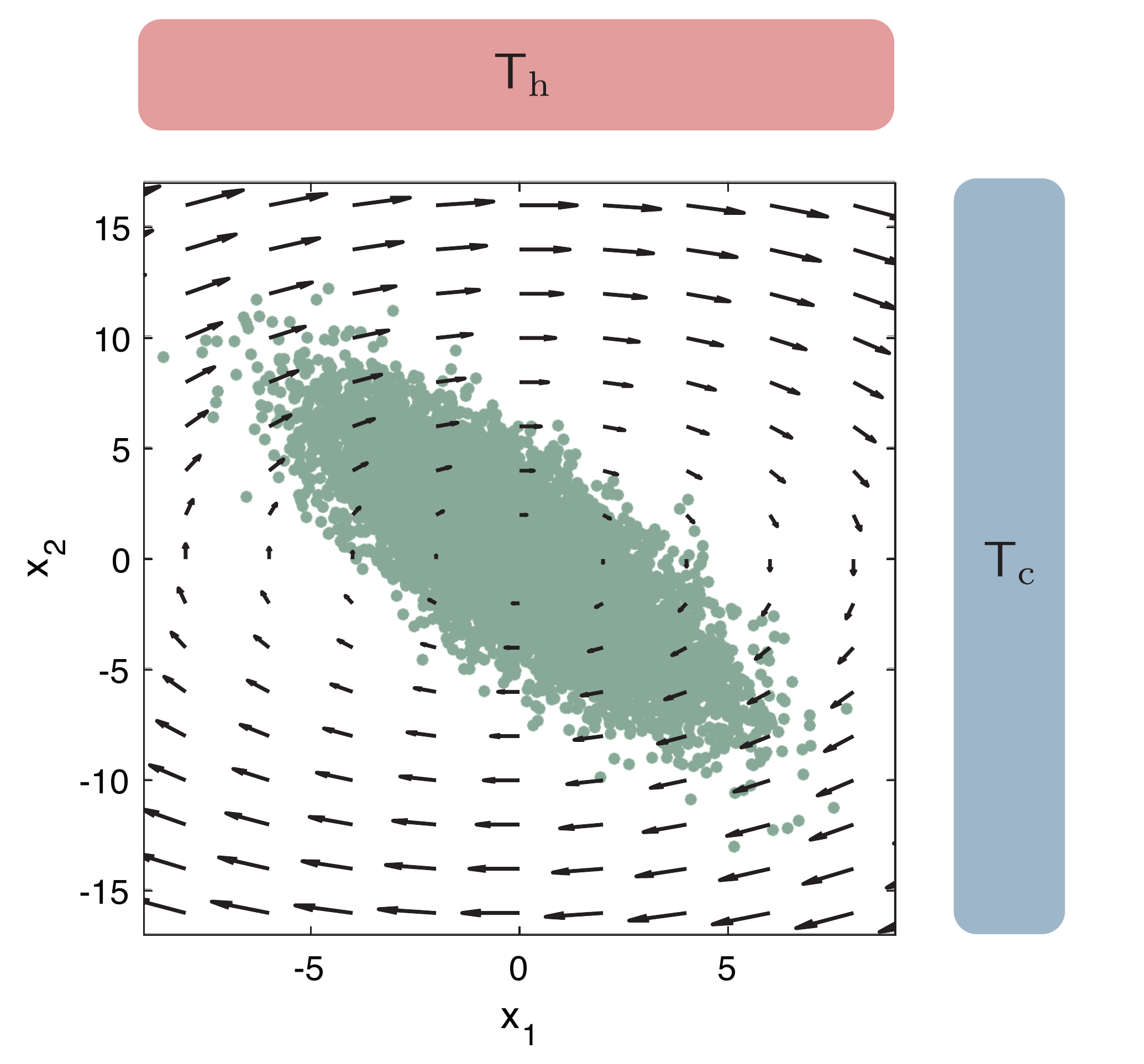}
\caption{Snapshots of a Brownian gyrator (green circles), described by Eqs.~(\ref{eq:gyrator1}) and~(\ref{eq:gyrator2}), sampled from the nonequilibrium stationary distribution. The black arrows illustrate the non-conservative forces given by Eqs.~(\ref{eq:forcesgyrator}). The system is trapped with a potential given by Eq.~(\ref{eq:potgyrator}) and put simultaneously in contact with a hot (red box) and a cold (blue box) reservoir that act on the $x_1$ and $x_2$ coordinates,  respectively. Values for the parameters are: $\mu=1$, $u_1=1$, $u_2=1.2$, $\mathsf{T}_{\rm c}=1$, $\mathsf{T}_{\rm h}=7$, $c=0.9$, $k=0.23$. The markers (green circles) are obtained from the $(x_1(t),x_2(t))$ coordinates of $10^4$ numerical simulations of Eqs.~(\ref{eq:gyrator1}) and~(\ref{eq:gyrator2}) evaluated at $t=5$.  }\label{figG1}
\end{figure}   

 Because of the external driving forces $f_1$ and $f_2$ and the presence of two thermal reservoirs at different temperatures, the gyrator develops  a nonequilibrium stationary state characterised by a current in the clockwise direction, see Fig.~\ref{figG1}, and a non-zero entropy production. At stationarity, we measure the work  $W$ that the  external driving force  exerts on the gyrator    and the net heat $Q_{\rm h}$ and $Q_{\rm c}$ that  the system absorbs from  the hot and cold reservoirs, respectively.   Following Sekimoto~\cite{sekimoto1997kinetic,sekimoto1998langevin}, these quantities are, respectively, 
  \begin{eqnarray}
\hspace{-1cm}  W(t) &=&  \int_0^t  f_1 (X_1(t'),X_2(t')) \circ {\rm d}X_1(t')  +   \int_0^t  f_2 (X_1(t'),X_2(t')) \circ {\rm d}X_2(t') ,\label{eq:WSeki}\\
\hspace{-1cm}  Q_{\rm h}(t) &=&  \int_0^t  \frac{\partial v(X_1(t'),X_2(t'))}{\partial x_1} \circ {\rm d}X_1(t')   - \int^t_0  f_1 (X_1(t'),X_2(t'))  \circ {\rm d}X_1(t'),\label{eq:QhSeki}\\
\hspace{-1cm}    Q_{\rm c}(t) &=&  \int_0^t  \frac{\partial v(X_1(t'),X_2(t'))}{\partial x_2} \circ {\rm d}X_2(t')  - \int^t_0   f_2(X_1(t'),X_2(t')) \circ {\rm d}X_2(t'),
 \end{eqnarray}
 where $\circ$ denotes that the stochastic integrals are interpreted in the Stratonovich sense.  When $k<k_{\rm s}$  this system operates as an engine  \cite{pietzonka2018universal},  i.e., $\langle  W(t) \rangle<0$, $\langle  Q_{\rm h}(t) \rangle>0$ and $\langle  Q_{\rm c}(t) \rangle>0$, with 
\begin{equation}
k_{\rm s} := c \frac{ \eta_{\rm C} }{2-\eta_{\rm C}},
\end{equation} the stall parameter and $\eta_{\rm C}$ Carnot's efficiency~(\ref{eq:Carnotefficiency}).     The   efficiency of the  engine in the nonequilibrium stationary state satisfies
\begin{equation}
  \eta = - \frac{\left\langle \frac{ {\rm d}W}{{\rm d}t}\right\rangle}{\left\langle \frac{{\rm d}Q_{\rm h}}{{\rm d}t}\right\rangle}=\frac{2k}{c+k}\leq \eta_{\rm C}. \label{eq:etaTh}
\end{equation} 
Note that when $k\rightarrow k_{\rm s}$, then $\eta\rightarrow \eta_{\rm C}$.

 \begin{figure}[t]
\centering
\hspace{-0.5cm}
\includegraphics[width=0.4\textwidth]{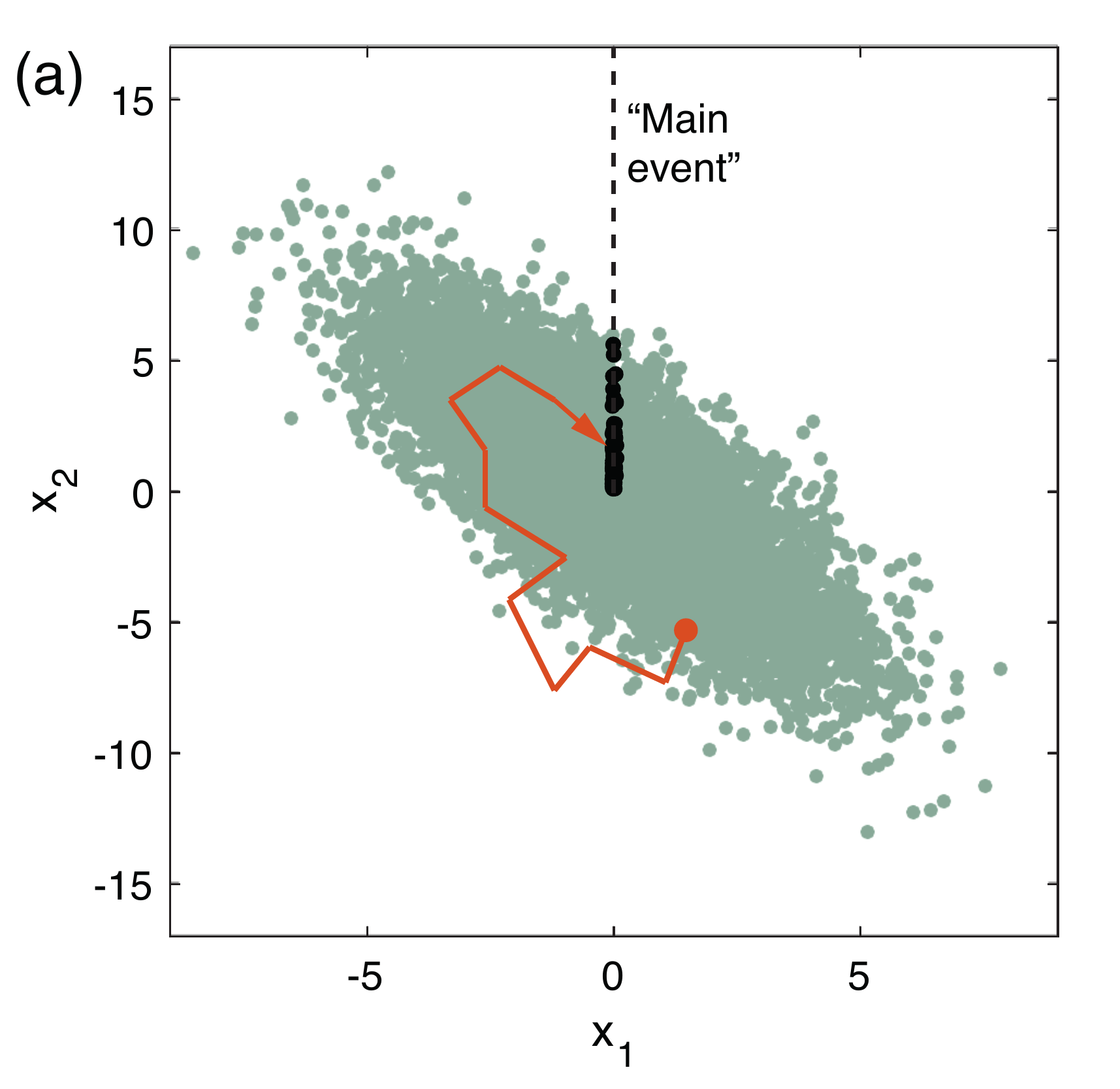}
\includegraphics[width=0.47\textwidth]{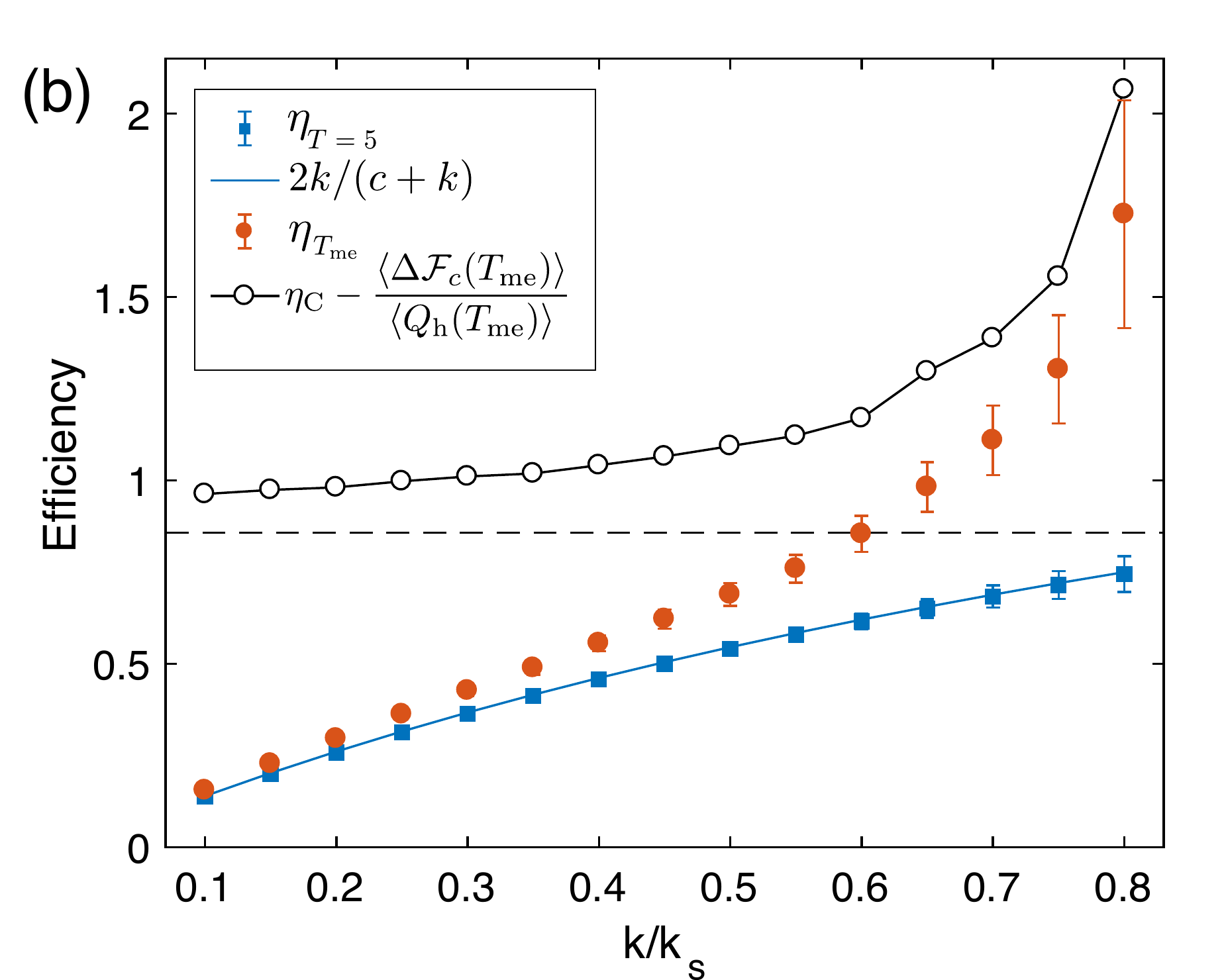}
\caption{Efficiency at stopping times for the Brownian gyrator. (a) Illustration of the stopping time at the main event with stopping time $T_{\rm me}$. A gyrator drawn initially from its stationary distribution (red circle) is monitored until its location crosses the black dashed line in the clockwise direction (red arrow). The location of $100$ gyrators at the stopping time are shown with black filled circles whereas the green circles denote the location of the gyrator at time $t=5$.  (b) Simulation results for the efficiency  $\eta_t =-\langle W(t)\rangle/\langle Q_{\rm h}(t)\rangle$ at fixed time $t=5$ (blue squares) and  the efficiency $\eta_{T_{\rm me}} =-\langle W(T_{\rm me})\rangle/\langle Q_{\rm h}(T_{\rm me})\rangle$ at the main event  $T_{\rm me}$  (red circles), defined by Eq.~(\ref{eq:Tme}),   as a function of the parameter $k/k_{\rm s}$ that quantifies the strength of the nonequilibrium driving, see (\ref{eq:forcesgyrator}).  The blue line is the theoretical value of the  efficiency  at large times given by Eq.~(\ref{eq:etaTh}),  and the black circles correspond to  the bound~(\ref{eq:carnotStoppxxxx}); the black line is a guide to the eye. The horizontal black dashed line is set at Carnot efficiency. Values of the parameters used in simulations are $\mu=1$, $u_1=1$, $u_2=1.2$, $\mathsf{T}_{\rm c}=1$, $\mathsf{T}_{\rm h}=7$, and $c=0.9$;  markers denote average values   estimated from $10^4$ independent realisations initially sampled  from the stationary state. Numerical simulations are performed with the   Heun's numerical integration scheme with a time step $\Delta t=10^{-3}$ \cite{sekimoto2010stochastic}. Error bars  denote the  standard errors of the empirical mean. }\label{figG2}
\end{figure}
We now investigate the efficiency of the Brownian gyrator at stopping times, $\eta_{T}=-\langle W(T)\rangle /\langle Q_{\rm h}(T)\rangle$. The simplest example of  stopping times are the  trivial stopping times  $T=t$, with $t$ a fixed time, for which $\eta_{ t}\leq \eta_{\rm C}$.   A more interesting example is the time $T_{\rm me}$ of the   {\it main event}, i.e. the gyrator crosses the positive $x_2$ axis while moving in the clockwise direction, occurs for the first time. Mathematically, $T_{\rm me}$ can be defined as follows: let $z(t)=x_1(t)+i x_2(t)$ be a complex number whose real and imaginary parts are  $x_1(t)$ and $x_2(t)$ respectively; we define $z(t)=r(t)e^{i\varphi(t)}$, with $r(t)=\sqrt{x_1^2(t) + x_2^2(t)}$ its modulus and  with $\varphi(t) = \tan^{-1} (x_2(t)/x_1(t))\in [-\pi , \pi]$ its phase; the stopping time at the main event is defined as
\begin{eqnarray}
T_{\rm me} = {\rm inf}\left\{t>0: \lim_{\epsilon\downarrow 0}  \varphi(t - \epsilon) > \pi/2 , \; \lim_{\epsilon\downarrow 0} \varphi(t+\epsilon) \leq  \pi/2 , \;  \varphi(t)>0   \right\}.
\label{eq:Tme}
\end{eqnarray} 
Figure~\ref{figG2}(a) illustrates the stopping  strategy defined by Eq.~(\ref{eq:Tme}) with numerical simulations.
 In Fig.~\ref{figG2}(b), we compare the values of the  efficiency $\eta_t$ at a fixed time  $t=5$   with the efficiency $\eta_{T_{\rm me}} $ at the main event, both as a function of the driving parameter $k$.    We observe that $\eta_t$ is well described by Eq.~(\ref{eq:etaTh}) and thus is smaller than the Carnot efficiency, whereas $\eta_{T_{\rm me}} $  can surpass  the Carnot efficiency  if  the strength of the driving force $k$  is large enough (Fig.~\ref{figG2}(b) red circles).     
   Interestingly, the observed super Carnot efficiencies at stopping times are  in agreement with  the bound~(\ref{eq:carnotStoppxxxx}), $\eta_{T_{\rm me}}\   \leq    \eta_{\rm C} + \langle \Delta \mathcal{F}_c(T_{\rm me})\rangle/\langle Q_{\rm h}(T_{\rm me})\rangle$ and thus compatible with the second law at stopping times~(\ref{eq:strong2}). Moreover, the bound~(\ref{eq:carnotStoppxxxx}) becomes tight when $k$ is large, which corresponds to a close-to-equilibrium limit [cf. Fig.~\ref{fig1b}]. 
   Hence,   efficiencies    of stopped engines can surpass the Carnot bound if   $\langle Q_{\rm h}(T)\rangle >0$ and $\langle\Delta S_{\rm sys}(T)\rangle > \mathsf{T}_{\rm c}\langle \Delta v(T) \rangle $,  which is  consistent with the bound~(\ref{eq:carnotStoppxxxx}).
     
Notice that efficiencies of stopped heat engines can be larger than one because  internal system energy can be converted into useful work.   
To better understand this feature, we plot in Fig.~\ref{figF} the average energetic fluxes at stopping times, namely, $\langle W(T)\rangle$, $\langle Q_{\rm h}(T)\rangle$ and $\langle Q_{\rm c}(T)\rangle $, together with the change  of the internal system energy $\langle \Delta v (T)\rangle$ and the internal system entropy change $\langle \Delta S_{\rm sys} (T)\rangle$. At fixed times $T=  5$, we observe the well-known features of a cyclic heat engine for which $\langle \Delta v (t)\rangle\simeq \langle \Delta S_{\rm sys} (t)\rangle \simeq 0$ (Fig.~\ref{figF}(a)).   Although  the heat and work fluxes of stopped engines have the same sign as  those of   cyclic heat engines, the   $\langle \Delta v (T_{\rm me})\rangle<0 $ and  $\langle \Delta S_{\rm sys} (T_{\rm me})\rangle < 0$, which indicates that on  average  the energy and  entropy of the gyrator  are  at $T_{\rm me}$ smaller than  their   initial values  (Fig.~\ref{figF}(b)). 
 \begin{figure}[t]
\centering
\includegraphics[width=0.9\textwidth]{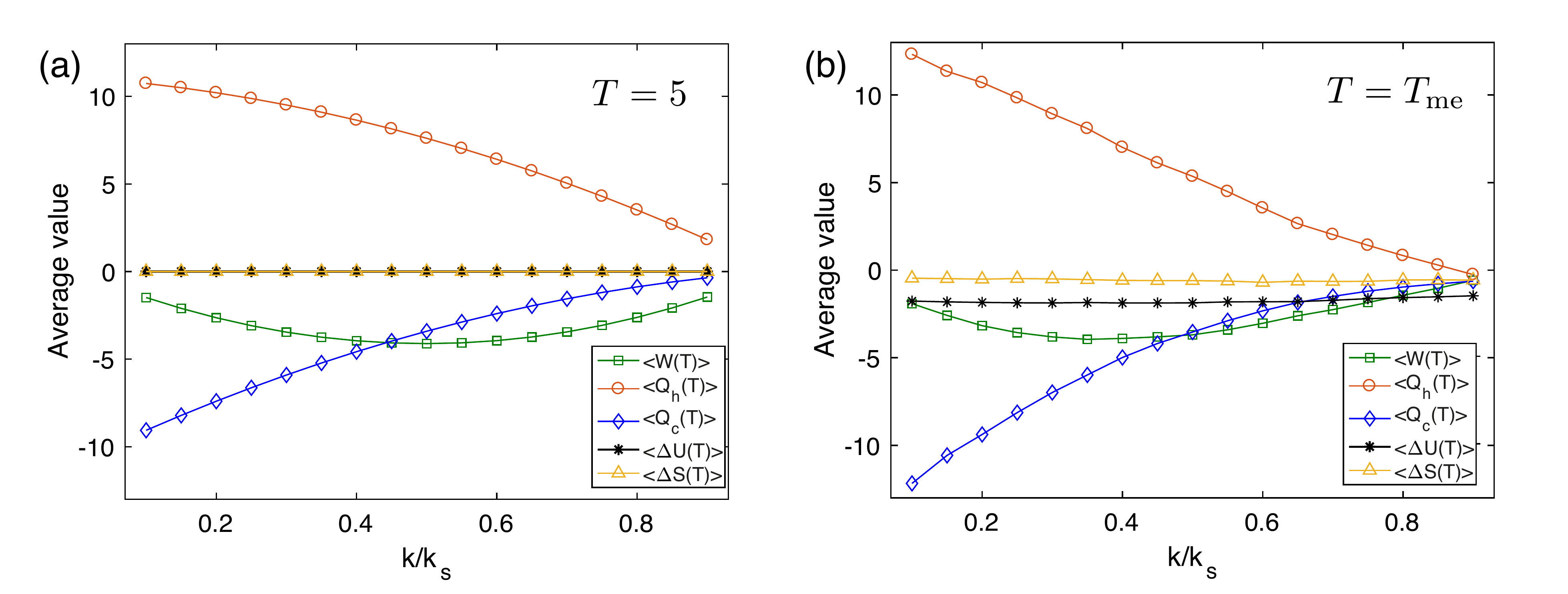}
\caption{Average values of thermodynamic observables (see legend)  in the Brownian gyrator are compared for a fixed time  $T=5$ (a) and the stopping time $T=T_{\rm me}$ (b) defined in  Eq.~(\ref{eq:Tme}).      The values of the parameters used in the numerical simulations are the same as in    Fig.~\ref{figG1}. }\label{figF}
\end{figure}   
This result suggests a recipe in the quest of super Carnot efficiencies at stopping times, namely by designing stopping strategies that lead to a reduction of the energy of the system. In our example, $ \langle \Delta S_{\rm sys} (T_{\rm me})\rangle > \langle \Delta v (T_{\rm me})\rangle$, which enables the appearance of super-Carnot stopping-time efficiencies which are nevertheless compatible with the second law at stopping times.  This  result motivates further research on   so-called type-II efficiencies at stopping times defined as the ratio between the average input and output fluxes of  entropy production~\cite{bejan2016advanced,verley2014universal,raz2016geometric}.

In Fig.~\ref{figH} we illustrate the bound provided by Eq.~(\ref{eq:carnotStoppxxxx}) for  a wider range of parameters.  We observe that when $k$ exceeds the stall parameter $k_{\rm s}$ the thermodynamic fluxes obey $\langle W(T_{\rm me}) \rangle<0$, $\langle Q_{\rm h}(T_{\rm me})\rangle<0$ and  $\langle Q_{\rm c}(T_{\rm me})\rangle<0$.  This behaviour is still compatible with the second law of thermodynamics at stopping times.   Indeed, for  this range of parameters $\eta_{T_{\rm me}}<0$ and the bound~(\ref{eq:carnotStoppxxxx}) becomes $\eta_{T_{\rm me}}\    \geq    \eta_{\rm C} -  \frac{\langle \Delta \mathcal{F}_c(T)\rangle}{\langle Q_{\rm h}(T)\rangle }$.   This bound is corroborated by   numerical simulations in Fig.~\ref{figH}(a) and Fig.~\ref{figH}(b).   We define this type of operation  as a  "type III heater", which is different than the type I and type II heaters~\cite{campisi2015nonequilibrium,rana2016anomalous}, that   appear in  stochastic thermodynamics at fixed times.  
 \begin{figure}[t]
\centering
\includegraphics[width=0.9\textwidth]{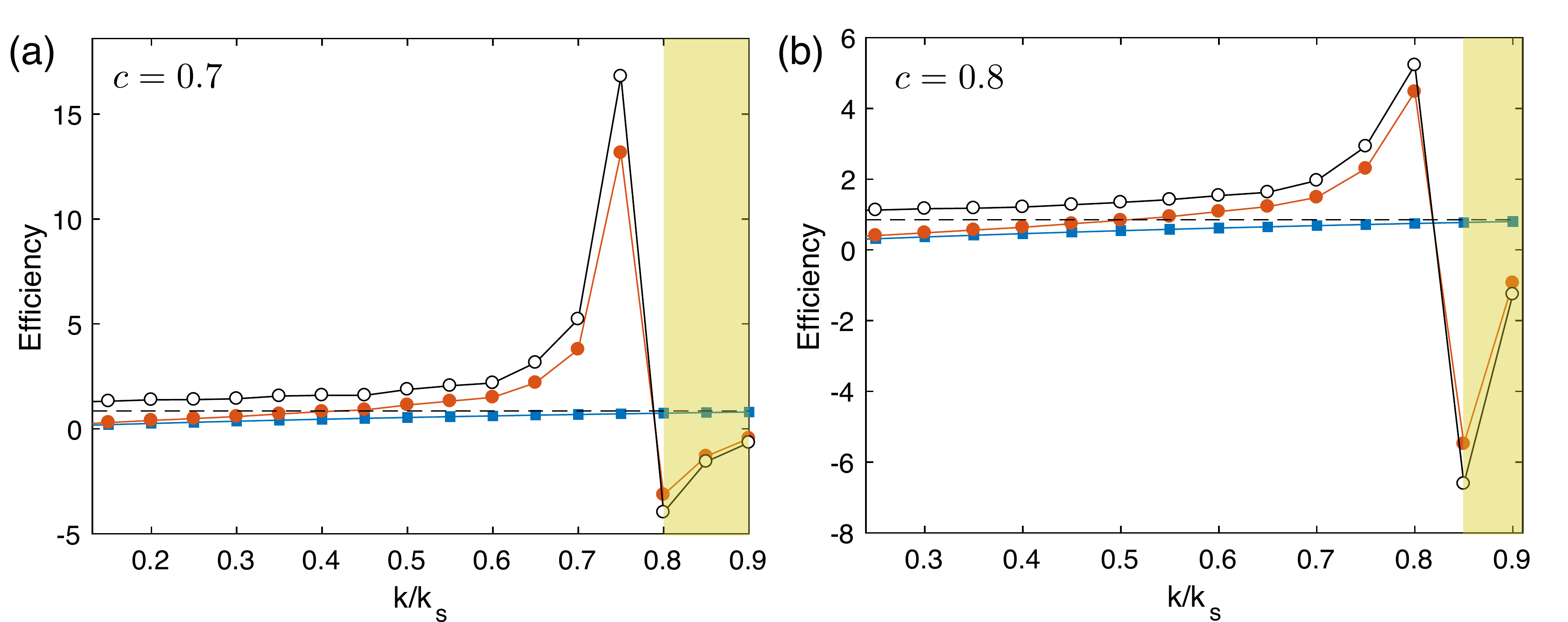}
\caption{ We compare simulation results for the Brownian gyrator of stopping-time efficiencies $\eta_T$ at   fixed time $T=5$  (blue squares)  and at the stopping time of the main event $T=T_{\rm me}$ (red circles). Panels (a) and (b) are obtained for two different values of the parameter $c$ (see legend), with the other simulation parameters set to the same values as in Fig.~\ref{figG1}. The black circles correspond to  the right-hand side of Eq.~(\ref{eq:carnotStoppxxxx}) and the black dashed line is set to Carnot efficiency; the lines between symbols are a guide to the eye. The yellow shaded area illustrates the range of parameters at which the system behaves as a "type-III heater" at stopping times~\cite{campisi2015nonequilibrium,rana2016anomalous}, for which the right-hand side of~(\ref{eq:carnotStoppxxxx}) becomes a lower bound to the stopping-time efficiency.}\label{figH}
\end{figure}

	 \section{Discussion}\label{sec:disc}
We have derived  fluctuation relations at stopping times for the entropy production of  stationary processes.     These fluctuation relations imply a second law of thermodynamics at stopping times --- which states that on average entropy production in a nonequilibrium stationary state always increases, even when the experimental observer measures the entropy production at a stopping time---  and imply that certain fluctuation properties of entropy production are universal.

We have shown that the second law of thermodynamics at stopping times has important consequences for the nonequilibrium thermodynamics of small systems. For instance, the second law at stopping times implies that it is possible to extract on average heat from an isothermal environment by applying    stopping strategies to a physical system in a  nonequilibrium stationary state; heat is thus extracted from the environment without  using a feedback control, as is the case with  Maxwell demons~\cite{toyabe2010experimental, koski2014experimental, parrondo2015thermodynamics}. Furthermore, we have demonstrated, using   numerical simulations with a Brownian gyrator, that the average efficiency of a stationary stochastic heat engine can surpass Carnot's efficiency when the engine is stopped at a cleverly chosen moment.     This result is  compatible with the second law at stopping times, which provides a bound on the  efficiency of stochastically stopped engines.      Note that the heat engines described in this paper are non cyclic devices since they are stopped at the stopping time $T$.    It would be interesting to explore how stochastically stopped engines can be implemented in experimental systems such as, electrical circuits~\cite{ciliberto2013statistical}, autonomous single-electron heat engines~\cite{koski2015chip}, feedback traps~\cite{jun2012virtual}, colloidal heat engines~\cite{martinez2017colloidal},  and  Brownian gyrators~\cite{argun2017experimental}.

Integral fluctuation relations at stopping times  imply   bounds on  the probability of events of negative entropy production that are stronger than those obtained with the integral fluctuation relation at fixed times.   For example, the integral fluctuation relation at stopping times implies that  the cumulative distribution of infima of entropy production is bounded by an exponential distribution with mean $-1$.     Moreover, for continuous processes the integral fluctuation relation at stopping times implies that the cumulative distribution of the global infimum of entropy production is equal to an exponential distribution.   
   A reason why the  integral fluctuation relation at stopping times is more powerful than the integral fluctuation relation at fixed times --- in the sense of bounding the likelihood of events of negative entropy production --- is  because with stopping times we can describe fluctuations 
   of entropy production that are not accessible with fixed-time arguments, such as fluctuation of infima of entropy production.     
  
The integral fluctuation relation at stopping times implies also bounds on other fluctuation properties of entropy production, not necessarily related to events of negative entropy production.  For example, we have used the integral fluctuation relation to derive bounds on     splitting probabilities of entropy production and  on the number of times entropy production crosses a certain interval.   For continuous processes these fluctuation properties of entropy production are universal, and we have obtained generic expressions for these fluctuation properties of entropy production.  

Since martingale theory has proven to be very useful to  derive  generic results about the fluctuations of entropy production, the question arises what other physical processes are martingales.     In this context, the exponential of the housekeeping heat has been demonstrated to be a martingale \cite{chetrite2018martingale, chun2018universal}.   The housekeeping heat is an  extension of entropy production to the case of  non-stationary processes.  We expect that all  formulas presented in this paper  extend  in a straightforward manner to the case of  housekeeping heat.  
  Recently also a martingale related to the quenched dynamics of  spin models \cite{ventejou2018progressive} and a  martingale in quantum systems \cite{PhysRevLett.122.220602} have been discovered.        
  
  There exist  universal fluctuation properties that are implied by the martingale property of $e^{-S_{\rm tot}(t)}$, but are not  discussed in this paper.   
For example,  martingale theory implies a   symmetry relation for the distribution of conditional stopping times~\cite{roldan2015decision, saito2015waiting, neri2017statistics, dorpinghaus2018testing}. These symmetry relations are also proved by using the optional stopping theorem, but they cannot  be seen as a straightforward consequence of the integral fluctuation relation at stopping times.

Since the integral fluctuation relation at stopping times  (\ref{stoppingInt}) is a direct consequence of the martingale property $e^{-S_{\rm tot}(t)}$, testing the fluctuation relation (\ref{stoppingInt})  in experiments could serve as a method to   demonstrate that $e^{-S_{\rm tot}(t)}$ is a martingale.  It is not so easy to show in an experiment that a stochastic process is a martingale: it is a herculean task to verify the condition (\ref{eq:equalxo}).    A recent experiment  \cite{singh2017records}  shows that the entropy production of biased transport of single charges in a double electronic dot behaves as a martingale.   The   inequality (\ref{eq:Sinfxx}) for the infima of entropy production was shown to be valid in this experiment.   The integral fluctuation relation at stopping times (\ref{stoppingInt}) provides an interesting alternative to test martingality of $e^{-S_{\rm tot}(t)}$, because the integral fluctuation relation at stopping times is an equality.   Hence, the integral fluctuation relation at stopping times could serve as a proxy for the martingale structure of $e^{-S_{\rm tot}(t)}$ in experiments.

Testing the integral fluctuation relation at stopping times in experimental setting may also be advantageous with respect to testing the standard fluctuation relation at fixed times.      The number of samples required to test the standard fluctuation relation increases exponentially with time, since events of negative entropy are rare.   This makes it difficult to test the conditions of stochastic thermodynamics at large time scales with the standard integral fluctuation relation.   Moreover, at fixed times the distribution of the empirical mean of the   exponentiated negative entropy production  is not known.   The integral fluctuation relation at stopping times does not have these issues, since  negative fluctuations of entropy can be capped at a fixed value $-s_-$, which is independent of time~$t$, and these negative values can be reached at any time.   Moreover, we have  derived an exact universal expression for the distribution of the sample mean  of the exponentiated negative entropy production at stopping times, which can be used to determine the statistical significance of empirical tests of the integral fluctuation relation at stopping times.      The integral fluctuation relation at stopping times is  thus a useful relation  to test  the conditions of stochastic thermodynamics  in a certain experimental setup.

     \appendix

     \section{The exponential of the negative entropy production is a (uniformly integrable) martingale} \label{Append1}
     We  prove that $e^{-S_{\rm tot}(t)}$ is a martingale.   To this aim, we have to verify the two conditions (\ref{eq:integrabl}) and (\ref{eq:equalxo}) for $M(t) = e^{-S_{\rm tot}(t)}$.
We present  two proofs, one for processes in discrete time  using the expression  (\ref{eq:Markov0}) for the entropy production, and a general proof for reversible right-continuous processes $\vec{X}(t)$ using the expression (\ref{eq:SGeneral}) for the entropy production.    

\subsection{Reversible processes in discrete time}
In discrete time 
\begin{eqnarray}
e^{-S_{\rm tot}(t,\omega)}  =  \frac{\tilde{p}(\vec{x}_1, \vec{x}_{2},\ldots, \vec{x}_{t})}{p(\vec{x}_1, \vec{x}_2,\ldots, \vec{x}_t)}
\end{eqnarray}
with $\tilde{p}$ the probability density function associated with the time-reversed dynamics.   We have simplified the notation a bit and used that $\vec{x}(t) = \vec{x}_t$.  

The condition   (\ref{eq:integrabl}) follows from $e^{-S_{\rm tot}(t,\omega)}\geq 0$ and  $E\left[e^{-S_{\rm tot}(t,\omega)}\right]=1$ for all $t\geq 0$.
The martingale condition  (\ref{eq:equalxo}) also holds, 
    \begin{eqnarray}
E\left[e^{-S_{\rm tot}(t,\omega)}|\mathscr{F}_s\right] &=& 
 \int \left(\prod^t_{n=s+1}{\rm d}\vec{x}_n\right)   p(\vec{x}_1, \vec{x}_2,\ldots, \vec{x}_t|\vec{x}_1, \vec{x}_2,\ldots, \vec{x}_s)  \nonumber\\ 
&& \times
   \frac{\tilde{p}(\vec{x}_1, \vec{x}_{2},\ldots, \vec{x}_{t})}{p(\vec{x}_1, \vec{x}_2,\ldots, \vec{x}_t)}  \nonumber\\  
&=&  
 \int \left(\prod^t_{n=s+1}{\rm d}\vec{x}_n\right)   \frac{p(\vec{x}_1, \vec{x}_2,\ldots, \vec{x}_t)}{p(\vec{x}_1, \vec{x}_2,\ldots, \vec{x}_s)}   \frac{\tilde{p}(\vec{x}_1, \vec{x}_{2},\ldots, \vec{x}_{t})}{p(\vec{x}_1, \vec{x}_2,\ldots, \vec{x}_t)}  \nonumber\\ 
&=& \int \left(\prod^t_{n=s+1}{\rm d}\vec{x}_n\right)   \frac{\tilde{p}(\vec{x}_1, \vec{x}_{2},\ldots, \vec{x}_{t})}{p(\vec{x}_1, \vec{x}_2,\ldots, \vec{x}_s)}\nonumber \\ 
&=&  \frac{\tilde{p}(\vec{x}_1, \vec{x}_{2},\ldots, \vec{x}_{s})}{p(\vec{x}_1, \vec{x}_2,\ldots, \vec{x}_s)} \nonumber\\ 
&=& e^{-S_{\rm tot}(s,\omega)}.
\end{eqnarray}    

\subsection{Reversible stationary processes that are right-continuous}
We assume that $\vec{X}(t)$ is rightcontinuous and that the two measures $\mathbb{P}$ and $\mathbb{P}\circ\Theta$ are locally mutually absolutely continuous, such that the entropy production (\ref{eq:SGeneral}) can be defined.  Because of the definition of entropy production, 
\begin{eqnarray}
e^{-S_{\rm tot}(t,\omega)}  =  \frac{\left.{\rm d}(\mathbb{P}\circ\Theta)\right|_{\mathscr{F}_t}}{\left.{\rm d}\mathbb{P}\right|_{\mathscr{F}_t}} .\label{eq:eszc}
\end{eqnarray}
The condition   (\ref{eq:integrabl}) follows from $e^{-S_{\rm tot}(t,\omega)}\geq 0$ and  $E\left[e^{-S_{\rm tot}(t,\omega)}\right]=1$ for all $t\geq 0$.

The martingale condition  (\ref{eq:equalxo}) follows readily from (\ref{eq:eszc}):
 \begin{eqnarray}
E\left[e^{-S_{\rm tot}(t,\omega)}|\mathscr{F}_s\right] = e^{-S_{\rm tot}(s,\omega)},  \quad s \leq t . \label{eq:matingaletobe}
\end{eqnarray}      
The relation (\ref{eq:matingaletobe}) is  a direct consequence of (\ref{eq:eszc}) and the definition of the Radon-Nikodym derivative: The Radon-Nikodym derivative  $e^{-S_{\rm tot}(s,\omega)} = \frac{\left.{\rm d}(\mathbb{P}\circ\Theta)\right|_{\mathscr{F}_s}}{\left.{\rm d}\mathbb{P}\right|_{\mathscr{F}_s}}$  is by definition a $\mathscr{F}_s$-measurable random variable for which 
\begin{eqnarray}
E_{\mathbb{P}}\left[ I_{\Phi}  \frac{\left.{\rm d}(\mathbb{P}\circ\Theta)\right|_{\mathscr{F}_s}}{\left.{\rm d}\mathbb{P}\right|_{\mathscr{F}_s}}  \right] =   E_{\mathbb{P}\circ\Theta}\left[I_{\Phi} \right]
\end{eqnarray}
 for all $\Phi\in\mathscr{F}_s$.      We show that $E\left[e^{-S_{\rm tot}(t,\omega)}|\mathscr{F}_s\right]$ is a random variable with this property, and therefore  (\ref{eq:matingaletobe}) is valid.  Indeed, for $\Phi\in\mathscr{F}_s$ it holds that:
  \begin{eqnarray}
 E_{\mathbb{P}}\left[ I_{\Phi}  E_{\mathbb{P}}\left[e^{-S_{\rm tot}(t,\omega)}|\mathscr{F}_s\right]   \right] &=&  E_{\mathbb{P}}\left[  E_{\mathbb{P}}\left[ I_{\Phi}e^{-S_{\rm tot}(t,\omega)}|\mathscr{F}_s\right]   \right]  \nonumber\\ 
 &=&  E_{\mathbb{P}}\left[  I_{\Phi}e^{-S_{\rm tot}(t,\omega)} \right] \nonumber\\ 
 &=& E_{\mathbb{P}\circ\Theta}\left[I_{\Phi} \right].
 \end{eqnarray}
    Note that the martingale condition (\ref{eq:equalxo})  is consistent with the tower property of conditional expectations: $E\left[E\left[X|\mathscr{F}_t\right]|\mathscr{F}_s\right] = E\left[X|\mathscr{F}_s\right]$.

\subsection{Uniform integrability}
     We prove that the stochastic process $e^{-S_{\rm tot}(t\wedge \tau)}$, with $t\in [0,\infty)$ and $\tau$ a fixed positive number, is uniformly integrable.    The process $e^{-S_{\rm tot}(t\wedge \tau)}$ can be written as 
     \begin{eqnarray}
 e^{-S_{\rm tot}(t\wedge \tau)} =     E\left[e^{-S_{\rm tot}(\tau,\omega)}|\mathscr{F}_t\right]. 
     \end{eqnarray}
     Since a stochastic process $Y(t)$ of the form $Y(t) = E[Z|\mathscr{F}_t]$ is uniformly integrable \cite{liptser2013statistics}, we obtain that  $e^{-S_{\rm tot}(t\wedge \tau)}$ is uniformly integrable.
     
\section{Integral fluctuation relation for entropy production within infinite-time windows } \label{Append2} 
We derive two corollaries of the optional stopping theorem, which is Theorem 3.6   in \cite{liptser2013statistics} and equation  (\ref{eq:Doob1}) in this paper.  
\begin{corollary}\label{corr1}
Let $T$ be a stopping time of  a stationary process $\vec{X}(t)$ and let $S_{\rm tot}(t)$ be the stochastic entropy production of $\vec{X}(t)$ as defined  in (\ref{eq:SGeneral}), with the two measures $\mathbb{P}$ and $\mathbb{P}\circ\Theta$ locally mutually absolutely continuous.   If $t$ is continuous, then $S_{\rm tot}(t)$  is assumed to be rightcontinuous.  
 If the two conditions
   \begin{enumerate}[(i)]
\item $\mathbb{P}(T<\infty) = 1$,
\item $\lim_{t\rightarrow \infty} E\left[ e^{-S_{\rm tot}(t)} I_{T>t}\right] = 0$, 
\end{enumerate}    
are met, then
\begin{eqnarray}
\langle e^{-S_{\rm tot}(T)}  \rangle = 1.
\end{eqnarray}
\end{corollary}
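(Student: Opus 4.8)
The plan is to reduce the (possibly unbounded) stopping time $T$ to the family of bounded stopping times $T\wedge t$, for which the finite-time-window relation (\ref{stoppingInt}) already applies, and then to let $t\rightarrow\infty$ while controlling the two pieces of the resulting decomposition with the hypotheses (i) and (ii). Throughout I would work with the martingale $e^{-S_{\rm tot}(t)}$ whose martingale property and uniform integrability on truncated windows are established in \ref{Append1}.

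First I would observe that for each fixed $t\ge 0$ the truncated time $T\wedge t$ is a bounded stopping time of $\vec{X}$, so that the finite-time-window fluctuation relation (\ref{stoppingInt}) with $\tau=t$ gives
\begin{eqnarray}
E\left[e^{-S_{\rm tot}(T\wedge t)}\right] &=& 1, \qquad \forall\, t\ge 0. \nonumber
\end{eqnarray}
I would then split this expectation according to whether the process has already stopped by time $t$, using that $T\wedge t = T$ on $\{T\le t\}$ while $T\wedge t = t$ on $\{T>t\}$:
\begin{eqnarray}
1 &=& E\left[e^{-S_{\rm tot}(T)}\, I_{T\le t}\right] + E\left[e^{-S_{\rm tot}(t)}\, I_{T>t}\right]. \nonumber
\end{eqnarray}

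Next I would send $t\rightarrow\infty$ in each term. The second term vanishes by hypothesis (ii). For the first term I would exploit the nonnegativity of $e^{-S_{\rm tot}(T)}$: the integrand $e^{-S_{\rm tot}(T)}\, I_{T\le t}$ is nonnegative and nondecreasing in $t$, and it converges pointwise to $e^{-S_{\rm tot}(T)}\, I_{T<\infty}$, which by hypothesis (i) equals $e^{-S_{\rm tot}(T)}$ almost surely. The monotone convergence theorem then yields $\lim_{t\rightarrow\infty} E[e^{-S_{\rm tot}(T)}\, I_{T\le t}] = E[e^{-S_{\rm tot}(T)}]$, and combining the two limits gives $\langle e^{-S_{\rm tot}(T)}\rangle = 1$.

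The delicate point, and where the two hypotheses earn their keep, is precisely this interchange of limit and expectation. Condition (ii) is exactly what kills the boundary term $E[e^{-S_{\rm tot}(t)}\, I_{T>t}]$; without it the martingale mass could leak to $t=\infty$, which is the same mechanism behind the failure of optional stopping for the unbounded Wiener process discussed in Section~\ref{app:martingales}. Condition (i) guarantees that $I_{T\le t}\uparrow 1$ almost surely, so no probability is lost on the event $\{T=\infty\}$, and the fixed sign of $e^{-S_{\rm tot}}$ makes the monotone convergence argument available without any need for a dominating function. I would close by remarking that under the physical assumptions used for (\ref{stoppingIntxxx}) in the main text, namely $e^{-S_{\rm tot}(t)}\rightarrow 0$ almost surely together with boundedness of $S_{\rm tot}$ on $[0,T]$, both (i) and (ii) can be verified, so that Corollary~\ref{corr1} specializes to the infinite-window relation (\ref{stoppingIntxxx}).
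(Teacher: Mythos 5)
Your proof is correct and follows essentially the same route as the paper's: both apply the finite-window relation (\ref{stoppingInt}) to $T\wedge t$, split the expectation on $\{T\le t\}$ versus $\{T>t\}$, kill the boundary term with hypothesis (ii), and handle the remaining term with hypothesis (i) and monotone convergence. The only (cosmetic) difference is that the paper applies monotone convergence to the decreasing sequence $e^{-S_{\rm tot}(T)}I_{T>t}$ whereas you apply it to the increasing sequence $e^{-S_{\rm tot}(T)}I_{T\le t}$, which is if anything slightly cleaner since the increasing form of the theorem carries no integrability caveat.
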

Recall that  $I_{\Phi}(\omega)$ is the indicator function define in (\ref{eq:ind}).  We use a proof analogous to the discrete time proof of Theorem 8.3.5 on page 222 of reference \cite{kannan1979introduction}.

\begin{proof}
     We decompose  $\langle e^{-S_{\rm tot}(T)} \rangle$ into three terms, 
      \begin{eqnarray} 
      \langle e^{-S_{\rm tot}(T)} \rangle =   \langle e^{-S_{\rm tot}(T\wedge t)} \rangle   -  \langle e^{-S_{\rm tot}(t)}I_{T>t} \rangle  +  \langle e^{-S_{\rm tot}(T)} I_{T>t} \rangle. \nonumber
     \end{eqnarray}  
     Since the  right-hand side holds for arbitrary values of $t$ we can take $\lim_{t\rightarrow \infty}$.   Using 
  (\ref{stoppingInt}) and the  condition (ii) we obtain  
     \begin{eqnarray} 
      \langle e^{-S_{\rm tot}(T)} \rangle =  1   +  \lim_{t\rightarrow \infty}\langle e^{-S_{\rm tot}(T)} I_{T>t} \rangle .     \end{eqnarray}  
Because of   condition (i),  it holds that $ \lim_{t\rightarrow \infty} e^{-S_{\rm tot}(T)} I_{T>t} = 0$ in the $\mathbb{P}$-almost sure sense.    Because 
$e^{-S_{\rm tot}(T)} I_{T>t}$ is a nonnegative monotonic decreasing sequence we can apply the monotone convergence theorem, see e.g.~\cite{tao2011introduction},    and we obtain
\begin{eqnarray}
\lim_{t\rightarrow \infty}\langle e^{-S_{\rm tot}(T)} I_{T>t} \rangle =\langle  \lim_{t\rightarrow \infty} e^{-S_{\rm tot}(T)} I_{T>t} \rangle  = 0.
\end{eqnarray}
\end{proof}
     
     \begin{corollary}
Let $T$ be a stopping time of  a stationary process $\vec{X}(t)$ and let $S_{\rm tot}(t)$ be the stochastic entropy production of $\vec{X}(t)$ as defined  in (\ref{eq:SGeneral}), with the two measures $\mathbb{P}$ and $\mathbb{P}\circ\Theta$ locally mutually absolutely continuous.       If $t$ is continuous, then $S_{\rm tot}(t)$  is assumed to be rightcontinuous.    If the two conditions
   \begin{enumerate}[(i)]
\item $\lim_{t\rightarrow \infty}e^{-S_{\rm tot}(t)} = 0$ in the $\mathbb{P}$-almost sure sense, 
\item there exist two positive numbers $s_-$ and $s_+$ such that $S_{\rm tot}(t)\in (-s_-, s_+)$ for all $t\leq T$,
\end{enumerate}    
are met, then
\begin{eqnarray}
\langle e^{-S_{\rm tot}(T)}  \rangle = 1. \label{eq:inegral}
\end{eqnarray}
\end{corollary}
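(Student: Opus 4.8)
The plan is to deduce this statement from the preceding Corollary~\ref{corr1}, whose hypotheses are (i) $\mathbb{P}(T<\infty)=1$ and (ii) $\lim_{t\to\infty}E\left[e^{-S_{\rm tot}(t)}I_{T>t}\right]=0$. I would show that the two conditions assumed in the present corollary imply precisely these two hypotheses; once that is done, Corollary~\ref{corr1} delivers $\langle e^{-S_{\rm tot}(T)}\rangle=1$ directly, so no further optional-stopping argument is needed.

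First I would verify that $\mathbb{P}(T<\infty)=1$ by a contrapositive argument. On the event $\{T=\infty\}$ every finite time $t$ satisfies $t\le T$, so condition (ii) of the present corollary forces $S_{\rm tot}(t)<s_+$, and hence $e^{-S_{\rm tot}(t)}>e^{-s_+}>0$, uniformly in $t$. Consequently $\liminf_{t\to\infty}e^{-S_{\rm tot}(t)}\ge e^{-s_+}>0$ on $\{T=\infty\}$, which contradicts condition (i), namely that $e^{-S_{\rm tot}(t)}\to 0$ $\mathbb{P}$-almost surely. Therefore $\{T=\infty\}$ must be a null set, giving the first hypothesis of Corollary~\ref{corr1}.

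Next I would verify the decay condition. On the event $\{T>t\}$ one has $t\le T$, so the boundedness hypothesis yields $S_{\rm tot}(t)>-s_-$ and thus $e^{-S_{\rm tot}(t)}<e^{s_-}$ there, giving the uniform bound $E\left[e^{-S_{\rm tot}(t)}I_{T>t}\right]\le e^{s_-}\,\mathbb{P}(T>t)$. Since $\mathbb{P}(T<\infty)=1$ from the previous step, $\mathbb{P}(T>t)\to 0$ as $t\to\infty$, and because the integrand is nonnegative this limit is exactly zero. Both hypotheses of Corollary~\ref{corr1} now hold and the conclusion follows. I do not expect a genuine obstacle here: the proof is short, and the only point demanding care is the bookkeeping of the inequalities in condition (ii), ensuring that the strip bound $S_{\rm tot}(t)\in(-s_-,s_+)$ is correctly invoked both on $\{T=\infty\}$ to rule out an infinite stopping time and on $\{T>t\}$ to obtain the uniform bound. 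Intuitively, condition (i) guarantees $S_{\rm tot}(t)\to+\infty$, so the process cannot stay confined to the bounded strip forever, which is exactly what prevents $T$ from being infinite.
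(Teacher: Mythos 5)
Your proposal is correct and follows essentially the same route as the paper: both reduce the statement to the preceding corollary by verifying its two hypotheses, using the strip bound to force $\mathbb{P}(T<\infty)=1$ from the almost-sure decay of $e^{-S_{\rm tot}(t)}$, and using the upper bound $e^{-S_{\rm tot}(t)}\leq e^{s_-}$ on $\{T>t\}$ to kill the tail expectation (the paper phrases this last step via dominated convergence, you via the direct estimate $e^{s_-}\mathbb{P}(T>t)\rightarrow 0$, which is an immaterial difference).
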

     
     \begin{proof} 
We show that if the  conditions of the present corollary  are met, then also  the conditions of corollary \ref{corr1} are met, and therefore (\ref{eq:inegral}) holds.  

 Because of condition (i), the stopping time $T$ is almost surely finite:  $ 1 =  \lim_{t\rightarrow \infty}\mathbb{P}\left[S_{\rm tot}(t)>s_+\right] \leq  \mathbb{P}\left[T <\infty\right]  $ and $\mathbb{P}\left[T <\infty\right]   \leq 1$, therefore $\mathbb{P}\left[T <\infty\right]   = 1$.
 
 Because of condition (i) and (ii) we obtain $\lim_{t\rightarrow \infty} E\left[ e^{-S_{\rm tot}(t)} I_{T>t}\right] = 0$: $e^{-S_{\rm tot}(t)} I(T>t) \leq e^{s_-}I(T>t)$  is a positive variable that is  bounded from above.  Hence, the dominated convergence theorem applies and $\lim_{t\rightarrow \infty} E\left[ e^{-S_{\rm tot}(t)} I_{T>t}\right] =E\left[\lim_{t\rightarrow \infty}  e^{-S_{\rm tot}(t)} I_{T>t}\right]  = 0$.
     \end{proof}

   \section{Derivation of the bound (\ref{eq:formulaToBe}) on  the statistics of $N_{\times}$} \label{sec:crossinC}
We denote by $M_{\times}(t)$  the number of times entropy production has crossed the interval  $[-\Delta, \Delta]$  in the direction $-\Delta \rightarrow \Delta$ within the time interval $[0,t]$, and therefore $N_{\times} = \lim_{t\rightarrow \infty}M_{\times}(t)$.  

We define two sequences of stopping times $T_n$ and $\tilde{T}_n$, with $n\in [0, N_{\times}]\cap \mathbb{Z}$, namely, 
\begin{eqnarray}
T_n =  {\rm inf}\left\{t: M_{\times}(t)\geq n\right\} 
\end{eqnarray}
and 
\begin{eqnarray}
\tilde{T}_n &=& {\rm inf}\left\{t : t \geq T_n, S_{\rm tot}(t) \notin (-\Delta, s_+) \right\},
\end{eqnarray}  
where $s_+$ is considered to be a very large positive number.

We apply the fluctuation relation (\ref{stoppingInt2}) to the two stopping times $\tilde{T}_n$ and  $T_n$.    The integral fluctuation relation (\ref{stoppingInt2}) implies that 
 \begin{eqnarray}
 E\left[e^{-S_{\rm tot}(\tilde{T}_{n})}|N_{\times} \geq n\right] = E\left[ e^{-S_{\rm tot}(T_n)}|N_{\times} \geq n\right].  \label{eq:eqautionCentral}
 \end{eqnarray} 
 Since $e^{-S_{\rm tot}(T_n)} \leq  e^{-\Delta}$,  the right-hand side of (\ref{eq:eqautionCentral}) is bounded by
  \begin{eqnarray}
E\left[ e^{-S_{\rm tot}(T_n)}|N_{\times} \geq n\right] \leq  e^{-\Delta}. \label{eq:ineq1}
 \end{eqnarray} 
The left-hand side of  (\ref{eq:eqautionCentral}) can be decomposed into two terms,
\begin{eqnarray}
\lefteqn{ E\left[e^{-S_{\rm tot}(\tilde{T}_{n})}|N_{\times} \geq n\right] =  E\left[e^{-S_{\rm tot}(\tilde{T}_{n})}I_{S_{\rm tot}(\tilde{T}_n)\geq s_+ }|N_{\times} \geq n\right]}&&  \nonumber \\ 
&&  +E\left[e^{-S_{\rm tot}(\tilde{T}_{n})}I_{S_{\rm tot}(\tilde{T}_n)\leq  -\Delta }|N_{\times} \geq n\right] ,
 \end{eqnarray}  
 where $I$ is the indicator function defined  in (\ref{eq:ind}).    We take the limit $s_+\rightarrow \infty$ and obtain
\begin{eqnarray}
\lim_{s_+\rightarrow \infty} E\left[e^{-S_{\rm tot}(\tilde{T}_{n})}|N_{\times} \geq n\right] &=& \lim_{s_+\rightarrow \infty}E\left[e^{-S_{\rm tot}(\tilde{T}_{n})}I_{S_{\rm tot}(\tilde{T}_n)\leq  -\Delta}|N_{\times} \geq n\right] \nonumber\\  
&\geq & e^{\Delta}  \lim_{s_+\rightarrow \infty}E\left[I_{S_{\rm tot}(\tilde{T}_n)\leq  -\Delta }|N_{\times} \geq n\right] .\nonumber
 \end{eqnarray} 
Since  \begin{eqnarray}
\mathbb{P}\left[N_{\times}\geq n+1| N_{\times}\geq n\right] = \lim_{s_+\rightarrow \infty}E\left[I_{S_{\rm tot}(\tilde{T}_n)\leq  -\Delta }|N_{\times} \geq n\right]  
 \end{eqnarray} 
we obtain the inequality 
 \begin{eqnarray}
 \lim_{s_+\rightarrow \infty} E\left[e^{-S_{\rm tot}(\tilde{T}_{n})}|N_{\times} \geq n\right] \geq e^{\Delta}\: \mathbb{P}\left[N_{\times}\geq n+1| N_{\times}\geq n\right]t.\label{eq:ineq2}
 \end{eqnarray}
 The relation (\ref{eq:eqautionCentral}) together with the two inequalities (\ref{eq:ineq1}) and (\ref{eq:ineq2})  imply
 \begin{equation}
\mathbb{P}\left[N_{\times}\geq n+1| N_{\times}\geq n\right] \le e^{-2\Delta}\qquad \mathrm{with}\quad n>0.
\end{equation}
This is the formula (\ref{eq:formulaToBe}) which we were meant to prove.

\section*{Acknowledgements}
We thank A.~C.~Barato, R.~Belousov, S.~Bo, R.~Ch\'{e}trite, R.~Fazio, S.~Gupta,  D.~Hartich, C.~Jarzynski, R.~Johal, C.~Maes, G.~Manzano, K.~Netočný,  P.~Pietzonka, M.~Polettini, K.~Proesmans,  S.~Sasa, F.~Severino, and K.~Sekimoto  for  fruitful discussions.  

\section*{References}
\bibliographystyle{ieeetr}
\bibliography{bibliography}

 \end{document}